\newcommand{\FDLU}[1]{\textcolor{white}{quasi-}FDLU#1\textcolor{white}{$\qquad$}}
\newcommand{\qFDLU}[1]{quasi-FDLU#1\textcolor{white}{$\qquad$}}
\theoremstyle{definition}
\def\KW{\text{KW}}
\def\KWd{\widehat{\text{KW}}}
\def\Rep{\text{Rep}}
\def\SPT{\text{SPT}}
\def\aut#1#2{\sigma^{#1} \left [ {#2} \right]}
\tikzset{
	>=stealth',
	help lines/.style={dashed, thick},
	important line/.style={thick},
	connection/.style={thick, dotted},
}
\def\bra#1{\mathinner{\langle{#1}|}}
\def\ket#1{\mathinner{|{#1}\rangle}}
\def\bs#1{\boldsymbol{#1}}
\def\ZZ{\mathbb Z}
\def\CC{\mathbb C}
\DeclareMathOperator{\Tr}{Tr}
\renewcommand\onecolumngrid{
\do@columngrid{one}{\@ne}%
\def\set@footnotewidth{\onecolumngrid}
\def\footnoterule{\kern-6pt\hrule width 1.5in\kern6pt}%
}
\def\l@subsection#1#2{}
\def\l@subsubsection#1#2{}
\definecolor{shadecolor}{gray}{0.95}
\newtheorem{theorem}{Theorem}
\newtheorem{definition}[theorem]{Definition}
\newtheorem{cor}[theorem]{Corollary}
\newtheorem{conjecture}[theorem]{Conjecture}
\newtheorem{lemma}[theorem]{Lemma}
\begin{document}

\title{Hierarchy of topological order\\
from finite-depth unitaries, measurement and feedforward}

\author{Nathanan Tantivasadakarn}
\affiliation{Walter Burke Institute for Theoretical Physics and Department of Physics, California Institute of Technology, Pasadena, CA 91125, USA}
\affiliation{Department of Physics, Harvard University, Cambridge, MA 02138, USA}

\author{Ashvin Vishwanath}
\affiliation{Department of Physics, Harvard University, Cambridge, MA 02138, USA}

\author{Ruben Verresen}
\affiliation{Department of Physics, Harvard University, Cambridge, MA 02138, USA}

\date{\today}

\begin{abstract}
Long-range entanglement---the backbone of topologically ordered states---cannot be created in finite time using local unitary circuits, or equivalently, adiabatic state preparation. Recently it has come to light that single-site measurements provide a loophole, allowing for finite-time state preparation in certain cases. Here we show how this observation imposes a complexity hierarchy on long-range entangled states based on the minimal number of measurement layers required to create the state, which we call ``shots''. First, similar to Abelian stabilizer states, we construct single-shot protocols for creating any non-Abelian quantum double of a group with nilpotency class two (such as $D_4$ or $Q_8$). We show that after the measurement, the wavefunction always collapses into the desired non-Abelian topological order, conditional on recording the measurement outcome. Moreover, the clean quantum double ground state can be deterministically prepared via feedforward---gates which depend on the measurement outcomes. Second, we provide the first constructive proof that a finite number of shots can implement the Kramers-Wannier duality transformation (i.e., the gauging map) for any solvable symmetry group. As a special case, this gives an explicit protocol to prepare twisted quantum double for all solvable groups. Third, we argue that certain topological orders, such as non-solvable quantum doubles or Fibonacci anyons, define non-trivial phases of matter under the equivalence class of finite-depth unitaries and measurement, which cannot be prepared by any finite number of shots. Moreover, we explore the consequences of allowing gates to have exponentially small tails, which enables, for example, the preparation of any Abelian anyon theory, including chiral ones. This hierarchy paints a new picture of the landscape of long-range entangled states, with practical implications for quantum simulators.
\end{abstract}

\maketitle

\tableofcontents

\section{Introduction}
A fundamental notion emerging from decades of research into the ground states of many-body quantum systems is that of \emph{long-range entanglement} (LRE) \cite{BravyiHastingsVerstraete06,Hastings2010,ChenGuWen11A,ChenGuWen11B,ZengWen15,HuangChen2015,Haah2016}. A thermodynamically large quantum state is said to exhibit LRE if it \emph{cannot} be obtained by applying a finite-depth local unitary (FDLU) to a product state, which can intuitively be envisioned as a `brick layer' of local gates. Sometimes, one allows the gates in the circuit to have exponentially decaying tails (we refer to this as a quasi-FDLU), which are the unitary transformations generated by time evolving with a local Hamiltonian.
States related by such quasi-FDLU circuits, at least in the absence of symmetries, closely parallel the notion of a single phase of matter. Hence, LRE states represent distinct phases and cannot be obtained by adiabatic state preparation\footnote{However, there exists a small subclass, namely invertible LRE (such as the Kitaev chain) which can efficiently be prepared by simply preparing a double copy as we discuss later.}.
This is an unfortunate situation for the burgeoning field of quantum simulators where the circuit depth is a costly resource \cite{NISQ}, since the most interesting and powerful states (e.g., for quantum computation purposes) are exactly those with LRE.

However, in addition to applying quantum gates, quantum simulators and computers can perform site-resolved \emph{measurements}. In fact, it is known that measurements allow certain LRE states to be efficiently prepared in finite time, independent of the system size \cite{Briegel01,Raussendorf05,Aguado08,Bolt16,Piroli21,HastingsHaah21,measureSPT,Rydberg,Bravyi22,Lu2022,oneshot}. More generally, measurements have been known to reduce complexity of certain computational problems and its precise limits remain an active front of exploration \cite{Hoyer05,GottesmanChuang99,Jozsa06,Browne10,LiuGheorghiu21,Friedman22}. This suggests that it is worthwhile to consider a coarser equivalence than usual for phases of matter. Indeed, Ref.~\onlinecite{Piroli21} recently introduced an equivalence class for states obtainable using local unitaries of fixed depth and a \emph{sequential} (and thus linear depth) number of local operations and classical communication, which includes measurements. 

In contrast, we instead consider measurement as a scarce resource in this work. We do not place a hard cutoff on the depth of the quantum circuit as long as it is finite (i.e., does not scale with the system size), and instead ask what states are obtainable using $\ell$ rounds of single-site measurements interspersed with finite-depth unitaries. Note that the case $\ell>1$ typically requires feedforward\footnote{Indeed, if there is no feedforward, one could simply collapse multiple measurement rounds into one. To see this, note we consider the physically-motivated case of \emph{single-site} measurements (typically on ancilla qubits), which thus always commute (although the choice and ordering of unitary gates means they can correspond to effectively measuring non-commuting observables in a particular order). \label{footnote:singlemeasurement}}: an FDLU that can depend on the measurement outcomes. We denote $\ell$ as the number of `shots' we need to prepare the state.

Exploring the classification of quantum states with respect to such a number of shots has at least two important consequences. First, it gives a new conceptual and analytic tool to organize and understand interesting emergent properties of many-body quantum states.  For this particular problem, using measurement as a scarce resource further organizes the already rich classification of LRE into a hierarchy of states based on the amount of resources needed to prepare it.  Second, asking for the minimal number of shots is especially timely for the preparation of such states in noisy intermediate-scale quantum (NISQ) devices \cite{NISQ}. Although performing mid-circuit measurements are now possible, an unavoidable overhead is the fact that the protocols to prepare LRE states require the ability to adapt the circuit dynamically based on past measurement outcomes, a technology in active development \cite{DynamicCircuitIBM21,MidCircuitHoneywell21,MidCircuitIBM22,MidCircuitSandia22}. Thus, minimizing the number of mid-circuit measurements $\ell$ and feedforward can potentially lead to preparing states with higher fidelity.

\begin{figure}
    \centering
    \includegraphics{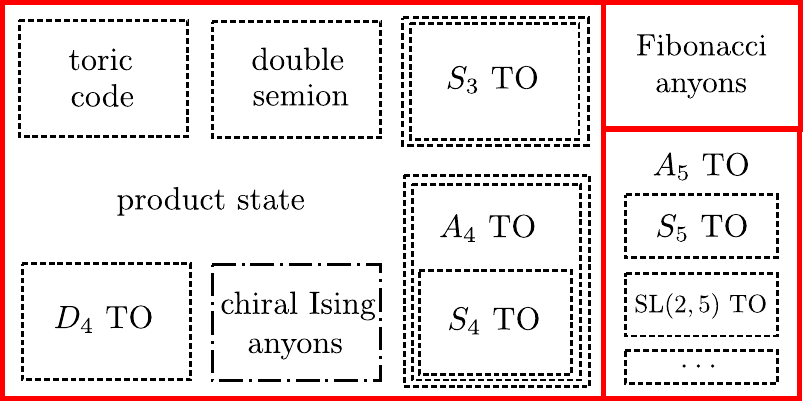}
    \caption{\textbf{Including measurements induces a hierarchy on the equivalence relation between topological orders.} Boxes are long-range entangled states which cannot be obtained from a product state by a finite-depth circuit (the topological order (TO) for a gauge group $G$ refers to the quantum double $\mathcal D(G)$). However, dashed lines indicate they can be obtained by using a layer of measurements and feedforward; the number of dashed lines equal the number of necessary shots. A dot-dashed line indicates the need for a quasi-FDLU rather than strictly local gates. Finally, solid (red) lines cannot be crossed by any finite number of measurement layers; these define non-trivial \textit{measurement-equivalent phases of matter} (red box). We argue that Fib is such an example, and similarly for non-solvable groups, with representatives for their measurement-equivalent phases being given by the quantum doubles of perfect centerless groups (e.g, $A_n$ with $n \geq 5$).}
    \label{fig:hierarchy}
\end{figure}

\begin{table*}[t]
\centering
\begin{tabular}{c||c|c}
resource & implementable states starting from product state & implementable maps\\
\hline \hline
\FDLU{} &
all cohomology SPT phases
&\FDLU{} \\ 
\qFDLU{} &
all SPT phases
& ? (all non-translation QCA?) \\ \hline
\FDLU{ + ancillas} & Kitaev chain, 3-Fermion Walker-Wang, ... & any QCA (translation, 3F QCA, ...) \\ 
\qFDLU{ + ancillas} & all invertible phases ($p+ip$, $E_8$, ...) & ? \\ \hline
$\begin{array}{c}
\textrm{\FDLU{}} \\
\textrm{+ one measurement layer}
\end{array}$ &
$\left\{ \begin{array}{l}
\textrm{all (twisted) Abelian quantum doubles (TC, DS, ...)} \\
\textrm{all nil-2 quantum doubles ($D_4$, $Q_8$, $B(2,3)$, ...)}
\end{array} \right.$
& $\left\{\begin{array}{l}
\textrm{Abelian Kramers-Wannier (KW)} \\
\textrm{Jordan-Wigner in any dimension}
\end{array}\right.$ \\
+ two measurement layers & all (twisted) metabelian quantum doubles ($S_3$, ...) & KW for all  metabelian groups \\
+ $n$ measurement layers & all (twisted) solvable quantum doubles & KW for all solvable groups \\ \hline
$\begin{array}{c}
\textrm{\qFDLU{}} \\
\textrm{+ one measurement layer}
\end{array}$  &
$\left\{ \begin{array}{l}
\textrm{all Abelian anyon theories (Laughlin $\nu=1/n$, ...)} \\
\textrm{Kitaev's 16-fold way (chiral Ising anyons, ...)}
\end{array} \right.$
& ?
\end{tabular}
\caption{\textbf{Hierarchy of quantum states via unitaries and measurement.} Implementable states and maps are stated according to the allowed resources. With only FDLUs, one can only prepare SRE phases, while invertible phases can be prepared if ancillas are allowed. With measurements, the topological order that one can prepare depends on how many measurement layers are allowed, while some topological orders are not even preparable with a finite layer of measurements, and are not included in this table (see Fig.~\ref{fig:hierarchy}).} \label{table:hierarchy}
\end{table*}

Our work reveals a new hierarchy on quantum states, of which we give an example in Fig.~\ref{fig:hierarchy}. In the largest red box, the number of dotted lines surrounding a particular state denotes the number of shots required in order to create the state from a product state. For example, the toric code and double semion, two of the simplest examples of Abelian topological order, can be prepared in one shot, while the $S_3$ quantum double, an example of a non-Abelian topological order, requires two shots. A natural conjecture is that to move from Abelian to non-Abelian topological orders, multi-shot protocols are essential. Surprisingly, this is not the case. A family of non-Abelian topological orders can be created with just a (carefully designed) single shot protocol, like their Abelian counterparts. A recent example was given for the quantum doubles of $D_4$ (the dihedral group of eight elements), with realistic gates amenable to quantum processors \cite{oneshot}. In this work, we provide a generalized protocol to prepare quantum doubles for any group of nilpotence class 2, which includes, for example, $D_4$ and $Q_8$ (the quaternion group), but not $S_3$ (the permutation group on a set of three elements). 

On the other hand, we argue that there are certain phases of matter that are not obtainable by a finite number of shots. We substantiate this for the quantum double of non-solvable groups, and also the Fibonacci topological order. In addition to being unreachable from a product state, certain non-solvable group quantum doubles are also unreachable from other other non-solvable groups. This motivates us to introduce the notion of a \textit{measurement-equivalent phase}, where in this coarser definition, two states are in the same phase (or equivalence class) if they are related to each other by a finite number of shots. Restricting to quantum doubles of finite groups, we propose a classification of such measurement-equivalent phases, shown as red boxes in Fig.~\ref{fig:hierarchy}. Each measurement-equivalent phase can be labeled by the quantum double of a perfect centerless group (with the solvable case corresponding to the trivial group). Moreover, the Fibonacci anyon theory also defines a non-trivial measurement-equivalent phase.

In addition to the hierarchy on \emph{states} (in a many-body Hilbert space), we also present a hierarchy on \emph{maps} (i.e., linear functions between many-body Hilbert spaces). Indeed, similarly to LRE states, there are certain maps of interest that cannot be written as an FDLU. Two celebrated examples are the Kramers-Wannier (KW) \cite{Wegner1971,Kogut1979,CobaneraOrtizNussinov2011,Haegeman_2015,Aasen16,VijayHaahFu2016,Williamson2016,KubicaYoshida2018,Pretko2018,ShirleySlagleChen2019,Radicevic2019} and Jordan-Wigner (JW) \cite{Jordan1928,SMLising,ChenKapustinRadicevic2018,ChenKapustin2019,Chen2019,Tantivasadakarn20,Shirley20,Po21,LiPo21} transformations. In particular we show that the Kramers-Wannier duality transformation of any solvable group $G$ can be constructed from an FDLU circuit with a finite number of shots. Here, the number of shots required is given by the derived length $l_G$ of the group, a quantity which measures how far the group is from being Abelian. This generalizes a previous result which was the Abelian case with $l_G=1$ \cite{measureSPT}.

Notably, the KW map can act on any $G$-symmetric state, which need not be a fixed point state. In fact, the input state can be critical or even long-range entangled. As a special case, choosing the input state to be the symmetric product state $\ket{+}^G_V$ gives an explicit scheme to construct any solvable quantum double. To the best of our knowledge, this is the first general protocol for such states; indeed, Ref.~\onlinecite{measureSPT} gave a general non-constructive existence argument, and Ref.~\onlinecite{Bravyi22} gave an explicit construction for the special case where each extension is split\footnote{this excludes, for instance, the quaternion group $Q_8$, though see \hyperlink{Note}{\emph{Note Added}}}. An important consequence of having an explicit KW map is that this automatically gives a way to prepare twisted quantum doubles for any solvable group (some of which are not quantum doubles of any group). Namely, we can first prepare a Symmetry-Protected Topological (SPT) state for any finite group $G$ using the FDLU from group cohomology given in Ref.~\onlinecite{Chen_2011} before applying the KW map.

\subsection{Terminology}

Let us briefly disambiguate the term \textit{feedforward} used in the paper. We follow the definition that feedforward refers to the fact that the measurement is performed in one subsystem, and the adaptive circuit is performed in a different subsystem~\cite{PhysRevApplied.9.034011}. This contrasts feedback, where the measurement and adaptive circuit act on the same subsystem. Feedforward is prominent in quantum protocols such as quantum teleportation~\cite{Teleportation,Ma2012teleportation}. Indeed, the implementation of the KW duality using measurement in~\cite{measureSPT} is akin to teleportation; after applying a quantum circuit, measurements are performed on the input subsystem, and the resulting state on the output subsystem (up to gates depending on the measurement outcome) is the KW dual of the input state.

We note that the distinction between feedforward and feedback is not necessarily a fundamental one: if a measurement of an ancilla qubit is preceded by a unitary gate, one can equally well consider the \emph{combined} object as a multi-body measurement, in which case a subsequent adaptive circuit could be seen as an example of feedback. However, we prefer to emphasize the fact that we always perform single-site measurements, for two reasons: (i) most measurement capabilities in quantum devices can indeed only measure single qubits, and (ii) it makes meaningful the notion of having a single (or multiple) layers of measurement, since single-site measurement always commute (see also footnote \ref{footnote:singlemeasurement}).

Second, we clarify the possible scenarios one can perform after measurement
\begin{enumerate}
\item One discards the measurement outcome
\item One records the measurement outcome, but one does not explicitly use it to correct the state.
\item One uses the recorded measurement outcome to act on the state
\end{enumerate}

Scenario 1 gives rise to a mixed state, whereas we wish to focus on pure states with long-range entanglement and/or topological order. Scenario 3 corresponds to a feedforward correction, which deterministically prepares the desired state. On the other hand, we do not call scenario 2 feedforward because the measurement outcome was not used to correct the state. Nevertheless, for the protocols we present which requires only one round of measurement, topological order can be prepared regardless of the measurement outcome\footnote{In this work, we consider the case where all measurement outcomes give rise to wavefunctions in the same phase of matter. In future work one can consider probabilistic versions, where, e.g., there is a finite probability of ending up in the desired phase.}, as long as the results are not discarded. More precisely, the statement is that for any measurement outcome, the resulting pure states will all be in the same non-trivial topological phase. We note that in the case of feedforward (scenario 3), we \emph{deterministically} prepare the clean state; it would be very interesting in future work to explore the \emph{probabilistic} case.

\subsection{Outline}

The sections of this paper are also structured according to this very hierarchy, ordered by the hardness of their preparation, and is summarized in Table~\ref{table:hierarchy}. In Sec.~\ref{sec:FDLU}, for completeness, we briefly review states that can already be prepared from a product state without the need of measurements. These include not only short-range entangled states, such as SPT phases, but also certain long-range entangled states, such as the Kitaev chain, provided that we use ancillas as resources. In Sec.~\ref{sec:oneshot}, we discuss states that only require one shot to prepare. This includes (twisted) Abelian quantum doubles, but also remarkably certain non-Abelian topological states, and we give an explicit protocol to construct all quantum doubles corresponding to a group of nilpotence class two. In Sec.~\ref{sec:finiteshots}, we give an explicit protocol to implement the Kramers-Wannier map for all solvable groups in a finite number of shots. This gives a method to prepare all (twisted) quantum doubles based on solvable groups. Highest in the hierarchy, in Sec.~\ref{sec:nonsolvable} we argue that there exists states that cannot be prepared by FDLU and a finite number of shots, namely the Fibonacci topological order, and the quantum doubles for non-solvable groups. Assuming this, we are able to derive how all quantum doubles of finite groups are organized into measurement-equivalent phases according to an associated perfect centerless group.  In Sec.~\ref{sec:qFDLU} we expand the allowed local unitary evolution to also include quasi-local ones (quasi-FDLU). We then reiterate through the hierarchy, showing that all invertible states can be prepared without measurement, and that certain chiral states, such as the chiral Ising topological order, can be prepared in one shot. We conclude in Sec.\ref{sec:outlook} with open questions.

\section{States obtainable without measurement}\label{sec:FDLU}

Since measurements play a key role in the results of this paper, it is equally important to review what states can already be prepared \textit{without} measurements. This is to establish that measurement is a \textit{necessary} ingredient for the scalable preparation of states in the later sections of this paper. Furthermore, these states will also serve as starting points upon which measuring gives rise to interesting states.

\subsection{FDLU: SPT states}
The first layer in the hierarchy (see Table~\ref{table:hierarchy}) naturally consists of states obtained by FDLU. In the landscape of phases of matter, this can prepare SPT states \cite{Gu09,Pollmann10,Fidkowski_2011,Turner11,Schuch11,ChenGuWen11A,ChenGuWen11B,Chen_2011,Chen_2013,pollmann_symmetry_2012,LuVishwanath12,SenthilLevin13,Levin_2012,VishwanathSenthil2013,Else_2014}, i.e., states that can only be prepared with FDLUs if the individual gates violate certain `protecting' symmetries. These states are of interest due to their entanglement structure, which in the case of cluster or graph states (obtained by applying a circuit of Controlled-$Z$ gates) can be used, e.g., for measurement-based quantum computation \cite{GottesmanChuang1999,BriegelRaussendorf2001,RaussendorfBriegel2001,RaussendorfBrowneBriegel2003}. In fact, we will discuss how the interesting short-range entanglement of various SPT states can be used to construct LRE via measurement \cite{measureSPT}.

\subsection{FDLU + ancillas: invertible states and QCAs} The next step in the hierarchy does not yet involve measurement, but merely ancilla qubits. Remarkably, there exist states which can \emph{only} be created from FDLU if one uses such ancillas. These states are invertible LRE states, such as the Kitaev-Majorana chain, i.e., the $p$-wave superconducting chain. Indeed, an FDLU \emph{can} create \emph{two decoupled} Kitaev chains (see Appendix \ref{app:Kitaev}). If we simply remove a single copy\footnote{This is usually forbidden in the ``stabilization" of a phase by ancillas: only product states can be removed. Nevertheless, see Ref.~\onlinecite{ShirleySlagleChen19_1} for a coarser definition of phase in fracton orders.}, we have thus obtained the Kitaev-Majorana chain.

More generally, one can ask about the class of maps obtained from FDLU and ancillas. This turns out to contain all locality preserving unitaries, also called Quantum Cellular Automata (QCA). Indeed, it is known that if a QCA acts on a Hilbert space $\mathcal H$, then there exists an FDLU for QCA $\otimes$ QCA$^{-1}$ on the doubled Hilbert space $\mathcal H \otimes \mathcal H$ \cite{HaahFidkowskiHastings18}. For instance, this allows the implementation of the translation operator on $\mathcal H$, which is yet another way to obtain the Kitaev-Majorana chain. Other interesting QCAs in higher dimensions have emerged in the past few years \cite{HaahFidkowskiHastings18,Haah21,Shirley22,Haah22}.

Similarly, while SPT states cannot be prepared using a finite depth of local gates that preserve the symmetry, they can still be prepared with the help of ancillas by first using a symmetric circuit to prepare the state SPT $\otimes$ SPT$^{-1}$ and then removing a single copy.

\section{States preparable in one shot: Abelian and nil-2 non-Abelian quantum doubles}\label{sec:oneshot}
It is known that certain (non-invertible) LRE states, including the Greenberger-Horne-Zeilinger (GHZ) state, toric code, and in fact any Calderbank-Shor-Steane (CSS) code, can be prepared by measuring cluster states \cite{Briegel01,Raussendorf05,Aguado08,Bolt16,Piroli21}. Recently, the present authors, in collaboration with Ryan Thorngren, generalized this by showing how FDLU and a single measurement layer can \emph{implement the Kramers-Wannier (KW) transformation for any Abelian symmetry} in any spatial dimension \cite{measureSPT}. This can be thought of as a protocol to `gauge' an Abelian symmetry. We will briefly recap this in Sec. \ref{sec:KWabelian} which allows us to prepare any twisted Abelian quantum double, as already noted in Ref.~\onlinecite{measureSPT}. We then point out in Sec.~\ref{sec:nil2} that beyond obtaining Abelian topological order, a single-shot protocol can even prepare a class of non-Abelian topological orders corresponding to the quantum doubles of nilpotent groups of class two, such as $D_4$ or $Q_8$. 

\subsection{Twisted Abelian quantum doubles}\label{sec:KWabelian}

Let us briefly recap this `gauging' protocol (or Kramers-Wannier (KW) duality) for a global $\mathbb Z_2$ symmetry generated by $\prod_v X_v$ (we denote the Pauli matrices by $X,Y,Z$) on an arbitrary cellulation of a closed spatial manifold. Starting with an \emph{arbitrary} $\ZZ_2$ symmetric wavefunction $|\psi\rangle_V$ defined on the vertices of a lattice, we introduce product state ancillas $\ket{\uparrow}_E$ on the edges. After applying Controlled-Not gates (denoted $CNOT$ or $CX$ more briefly) to all nearest-neighbor bonds (with control on vertices and target on edges), projecting the vertices into a symmetric product state implements the KW map: 
\begin{equation}
\KW_{EV}^{\mathbb Z_2} |\psi\rangle_V = \langle +|_V \prod_{\langle v,e\rangle} CX_{ve} |\uparrow\rangle_{E} \otimes |\psi\rangle_V. \label{eq:KW}
\end{equation}
Here, we use the shorthand $\ket{\uparrow}_E = \bigotimes_{e\in E}\ket{\uparrow}_e$ and  $\ket{+}_V = \bigotimes_{v\in V}\ket{+}_V$.  Crucially, Eq.~\eqref{eq:KW} does \emph{not require post-selection}: if one measures the vertices in the $X$-basis and finds $|-\rangle$, these can always be paired up\footnote{Symmetry dictates these always come in pairs.} in one extra unitary layer using string operators. In particular, a pair of $\ket{-}$ on two vertices can be turned into a $\ket{+}$ by applying a string of $Z_e$ on all edges connecting these two vertices. This is the only step relying on the symmetry group being \emph{Abelian}: the measurement outcomes label Abelian gauge charges (anyons in 2D) which can always be paired up in finite time. Applying Eq.~\eqref{eq:KW} to the special case of a 2D product state deterministically prepares the toric code, whereas choosing a $\mathbb Z_2$ SPT phase \cite{Levin_2012} (which is itself preparable using FDLU, since the preparing circuit does not need to respect symmetry.) leads to double-semion (DS) topological order \cite{measureSPT}.

Let us also note two related versions of the KW map: (i) the $\KW$ in Eq.~\eqref{eq:KW} maps the Ising interaction as $Z_v Z_{v'} \to Z_e$. Alternately, (ii) applying the Hadamard gate on all edges replaces $CX$ by $CZ$ (i.e., Controlled-Z) and $\ket{\uparrow}_E$ by $\ket{+}_E$ giving the map $Z_v Z_{v'} \to X_e$, which we denote by $\KWd$. Specifically,
\begin{align}
    \KWd_{EV}^{\mathbb Z_2} &= \left(\prod_e H_e \right) \KW _{EV}^{\mathbb Z_2} \left( \prod_e H_e\right) \\ &= \bra{+}_V\prod_{\langle v,e\rangle} CZ_{ve} \ket{+}_{E}.
\end{align}
in which we recover the protocol of measuring cluster states by using $\ket{+}_V$ as an input.

The measurement protocol to implement the KW duality generalizes naturally to any finite Abelian group $A$, by using the corresponding generalizations of $CX$ and $CZ$. In this case, the graph must now be \textit{directed}. Each edge $e$ can be associated with an ``initial" vertex $i_e$ and ``final" vertex $f_e$, and reciprocally, for each vertex $v$ we denote the set of edges pointing into and out of $v$ as $e\rightarrow v$ and $e\leftarrow v$, respectively. The KW map is constructed as
\begin{align}
    \KW_{EV}^{A} &=  \bra{+}_V \prod_v \left [\prod_{e\rightarrow v} (CX^A_{ve})^\dagger \prod_{e\rightarrow v} CX^A_{ve}\right] \ket{1}_{E}\\
    &= \bra{+}_V \prod_{e} CX^A_{i_ee}  (CX^A_{f_ee})^\dagger \ket{1}_{E}
    \label{eq:KWabelian}
\end{align}
where $\ket{+} = \frac{1}{|A|} \sum_{a \in A} \ket{a}$, $1$ is the identity in $A$, and the generalized $CX$ gate for the group $A$ is defined as
\begin{align}
    CX^A_{ve} \ket{a_v,a_e} = \ket{a_v,a_va_e} 
\end{align}
Similarly, $ \KWd_{EV}$ can be obtained by performing the Fourier transform on all edges, which changes $CX$ to $CZ$ and $\ket{1}$ to $\ket{+}$. We get
\begin{align}
\KWd_{EV}^{A} &=  \bra{+}_V \prod_v \left [\prod_{e\rightarrow v} (CZ^A_{ve})^\dagger \prod_{e\rightarrow v} CZ^A_{ve}\right] \ket{+}_{E}\\
    &= \bra{+}_V \prod_{e} CZ^A_{i_ee}  (CZ^A_{f_ee})^\dagger \ket{+}_{E}
    \label{eq:KWdabelian}
\end{align}
where the generalized $CZ$ gate is defined as
\begin{align}
    CZ^A_{ve}\ket{a_v,a_e} &= \chi^{a_v}(a_e)\ket{a_v,a_e}.
\end{align}
Here, we use the fact that for Abelian groups, there is an isomorphism between group elements and irreps so that we can define $\chi^{a}$, the character corresponding to the group element $a$. Again, post-selection is not required since one can always pair up the corresponding charges in finite time.

Nevertheless, the above construction does not naively apply to the KW map for non-Abelian groups (which would be a method to prepare non-Abelian topological order), since FDLU cannot pair up the non-Abelian anyons that result as measurement outcomes\cite{Shi19} (see Sec.~\ref{sec:KWG} for a full discussion on this obstruction). However, some groups are obtained by a finite number of Abelian extensions. E.g., the symmetry group of the square, $D_4$, can be obtained by extending $\mathbb Z_2 \times \mathbb Z_2$ (the horizontal and vertical mirror symmetries) by $\mathbb Z_2$ (the diagonal mirror); note that these two do not commute\footnote{Indeed, the product of the horizontal and diagonal mirror symmetries gives the $\mathbb Z_4$ rotation.}. Using this observation, Ref.~\onlinecite{measureSPT} observed that successively applying Eq.~\eqref{eq:KW} can thus generate any quantum double for a \emph{solvable} gauge group; Ref.~\onlinecite{Rydberg} presented explicit two-step protocols for $D_4$ and $S_3$ (see also Ref.~\onlinecite{Bravyi22}).

The above line of reasoning strongly suggests that it is impossible to create non-Abelian topological order in a \emph{single} shot. Perhaps surprisingly, this expectation is false. In Ref.~\onlinecite{oneshot}, the present authors argued that it is in fact possible to prepare non-Abelian topological order that admits a Lagrangian subgroup in a single shot and gave explicit protocols to prepare the $D_4$ and $Q_8$ topological orders. In this work, we present an explicit protocol to prepare a class of non-Abelian topological orders, all of which can be prepared in a single shot: the quantum double for class-2 nilpotent groups.

\begin{figure*}[t!]
    \centering
    \includegraphics[scale=0.9]{}
    \caption{\textbf{Single-shot preparation of the quantum double for nil-2 groups.} The three circuit layers needed to entangle the product state in Eq.~\eqref{eq:nil2}. After measuring vertices (blue) and plaquettes (red), the resulting state on the edges (purple) exhibits $G$ topological order regardless of measurement outcome. If one desires, the exact ground state of $\mathcal D(G)$ can be recovered by a string of $X^n$ on the dual lattice and a string of $Z^q$ on the direct lattice. }
    \label{fig:nilpotent}
\end{figure*}

\subsection{Quantum double for class-2 nilpotent $G$ in one shot}\label{sec:nil2}
A group $G$ is class-2 nilpotent (commonly called a nil-2 group) if there exists finite Abelian groups $N$ and $Q$ such that the extension
\begin{equation}
 1 \rightarrow N  \xrightarrow[]{} G  \xrightarrow[]{} Q \rightarrow 1
 \label{eq:centralextension}
\end{equation}
is \textit{central}. That is, $N$ is contained in the center $Z(G)$. Such central extensions are specified by a function $\omega:Q^2\rightarrow N$ called a 2-cocycle $(\omega \in H^2(Q,N))$, which determines how multiplication of $Q$ can give rise to elements in $N$. As a 2-cocycle, $\omega$ satisfies the cocycle condition
\begin{align}
    \omega(q_2,q_3) \omega(q_1,q_2q_3)  = \omega(q_1,q_2) \omega(q_1q_2,q_3).
\end{align}
Elements in $g\in G$ can be denoted by the pair $(n,q)$ whose group law is given via
\begin{align}
    (n_1,q_1) \times (n_2,q_2) = (n_1n_2\omega(q_1,q_2),q_1q_2).
\end{align}

As an example, consider $N=\ZZ_2$ and $Q=\ZZ_2^2$. Denote an element $q\in Q$ as the pair $(a,b)$ where $a,b \in \{0,1\}$ with addition as the group multiplication. One choice of a cocycle is
\begin{align}
    \omega((a_1,b_1),(a_2,b_2)) = a_1b_2
\end{align}
One can check that the above cocycle condition is satisfied, and by further checking the group multiplication, one finds the resulting group is the dihedral group $G=D_4$. On the other hand, the cocycle
\begin{align}
    \omega((a_1,b_1),(a_2,b_2)) = a_1a_2 +a_1b_2 + b_1b_2
\end{align}
gives rise to the quaternion group $G=Q_8$.

In previous proposals \cite{measureSPT,Bravyi22} such quantum double $\mathcal D(G)$ requires two rounds of measurements by sequentially performing KW on $N$ then $Q$. Conceptually, pairing up the charges in the first round before continuing is crucial to avoid creating non-Abelian charges in the second measurement round.  In the current proposal, we can prepare the same state in one shot by instead gauging a particular $N\times Q$ SPT state. The Abelian charges we measure by gauging this SPT can be translated into a combination of Abelian charge and fluxes of $\mathcal D(G)$, which braid trivially.

We now give the exact claim for preparing the ground state of the quantum double model for a nil-2 group $G$ on the edges $E$ (purple) of square lattice using ancillas on the vertices $V$ (blue) and plaquettes $P$ (red) as in Fig. \ref{fig:nilpotent} (though it applies to arbitrary graphs).  The local Hilbert space is given by the group algebra with basis elements $\ket{q_v} \in \CC[Q]$ on $V$, $\ket{n_p} \in \CC[N]$ on $P$, and  $\ket{n_e,q_e} \in \CC[N]\times \CC[Q] \cong \CC[G]$ on $E$. Each edge is given two directions: one connects between vertices (black arrows) and one connects plaquettes (grey arrows).

\begin{lemma}
   \begin{shaded} 
   The ground state for the nil-2 quantum double can be expressed as
   \begin{align}
    \ket{\mathcal D(G)}_E &=\bra{+}_{VP} CX^Q_{VE}   \Omega_{VEV}    CZ^N_{PE} \ket{+}_{V}\ket{+,1}_E\ket{+}_P \nonumber\\
      &=\KW^Q_{EV} \Omega_{VEV} \KWd^N_{EP} \ket{+}_V \ket{+}_P \label{eq:nil2}
\end{align}
where the action of $\Omega_{VEV}$ is defined as\footnotemark
\begin{align}
      \Omega_{VEV} \ket{ \{q_v\},\{n_e\} } &= \ket{ \{q_v\},\{n_e \bar \omega(q_{i_e},\bar q_{i_e}q_{f_e})\} } \label{eq:OmegaVEV}
\end{align}
   \end{shaded}
\end{lemma}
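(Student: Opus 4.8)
The plan is to verify the operator identity in Eq.~\eqref{eq:nil2} by direct computation, comparing its right-hand side to the standard gauge-theory form of the untwisted quantum double ground state,
\[ \ket{\mathcal D(G)}_E \propto \sum_{\{g_v\in G\}} \bigotimes_e \ket{g_{i_e}\bar g_{f_e}}, \]
the equal-weight superposition over the gauge orbit of the trivial flat connection. First I would write each gauge transformation in coordinates $g_v=(m_v,q_v)\in N\times Q$ and expand the edge variable using the twisted product and the inverse $(m,q)^{-1}=(\bar m\,\bar\omega(q,\bar q),\bar q)$. A short manipulation with the normalized cocycle condition shows that $g_{i_e}\bar g_{f_e}$ has $Q$-part $q_{i_e}\bar q_{f_e}$ and $N$-part $m_{i_e}\bar m_{f_e}\,\bar\omega(q_{i_e}\bar q_{f_e},q_{f_e})$; this cocycle factor is the signature of the nontrivial extension that the protocol must reproduce.

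Next I would peel off the layers of the right-hand side acting on $\ket{+}_V\ket{+,1}_E\ket{+}_P$. In the $N$-sector, $CZ^N_{PE}$ followed by the projection $\bra{+}_P$ is precisely $\KWd^N_{EP}$ carried out on the dual lattice (plaquettes as dual vertices); by the Abelian result of Sec.~\ref{sec:KWabelian} it projects the $N$-part of the edges onto flat $N$-connections, which on a simply connected lattice are pure gradients $n_e=m_{i_e}\bar m_{f_e}$, yielding $\sum_{\{m_v\in N\}}\bigotimes_e\ket{m_{i_e}\bar m_{f_e}}_N$. The operator $\Omega_{VEV}$ then multiplies this $N$-part by $\bar\omega(q_{i_e},\bar q_{i_e}q_{f_e})$, entangling it with the $Q$-vertex registers that are still in superposition, and finally $CX^Q_{VE}$ with $\bra{+}_V$ realizes $\KW^Q_{EV}$, setting the $Q$-part of each edge to $q_{i_e}\bar q_{f_e}$ and summing over $\{q_v\}$. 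Collecting the layers gives
\[ \text{RHS}\propto \sum_{\{m_v\},\{q_v\}} \bigotimes_e \ket{\,m_{i_e}\bar m_{f_e}\,\bar\omega(q_{i_e},\bar q_{i_e}q_{f_e})\,,\; q_{i_e}\bar q_{f_e}\,}. \]

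It remains to match this against the target. The $Q$-parts coincide identically, so the whole content sits in the $N$-parts, where I must reconcile the factor $\bar\omega(q_{i_e},\bar q_{i_e}q_{f_e})$ produced by $\Omega$ with the group-law factor $\bar\omega(q_{i_e}\bar q_{f_e},q_{f_e})$. Because the summation runs over all $\{m_v\in N\}$ at fixed $\{q_v\}$, the relabeling $m_v\to m_v\,\phi(q_v)$ shifts every edge $N$-component by $\phi(q_{i_e})\bar\phi(q_{f_e})$ and leaves the state invariant; hence it suffices to show that the two cocycle factors differ by such a product-form coboundary. Writing $a=q_{i_e}$, $b=q_{f_e}$, both factors are $\bar\omega$ evaluated on pairs with respective products $a$ and $b$, and I would use the cocycle identity (e.g.\ $\omega(a,\bar a b)=\bar\omega(\bar a,b)\,\omega(a,\bar a)$) to reduce their ratio to the single-variable pieces $\omega(a,\bar a)$, $\bar\omega(b,\bar b)$—which fix $\phi(q)=\omega(q,\bar q)$—times a residual two-variable ``mixed'' factor.

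I expect this residual factor to be the crux of the argument, and it is exactly where nilpotency class two enters. Since $N$ is central, the commutator pairing $\beta(a,b)=\omega(a,b)\bar\omega(b,a)$ descends to a bilinear alternating map $Q\times Q\to N$, and this bilinearity is what forces the mixed factor to be trivial; indeed it vanishes identically for the bilinear cocycles of $D_4$, $Q_8$, and the Heisenberg groups, so that $\phi(q)=\omega(q,\bar q)$ closes the argument. The delicate part of a fully general proof is therefore the cohomological bookkeeping: invoking the cocycle condition together with centrality to argue that the mixed factor is an absorbable coboundary, which may require choosing a convenient (bilinear) representative of $\omega$ within its class. Once this is settled the two expressions agree term by term and the identity follows; as an independent check one can verify that the right-hand side is a $+1$ eigenstate of the vertex and plaquette stabilizers of $\mathcal D(G)$, which on a sphere pins down the ground state uniquely.
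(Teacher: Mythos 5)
Your overall plan (peel off the layers, compare edge-by-edge coefficients, close with a cocycle identity) is viable, but as written it contains a genuine gap, and the gap is precisely the ``residual mixed factor'' you flag as the crux. The source is an orientation-convention mismatch: you take the target state and the KW layers to produce domain walls $g_{i_e}\bar g_{f_e}$ (so $Q$-part $q_{i_e}\bar q_{f_e}$, $N$-part $m_{i_e}\bar m_{f_e}$), whereas the Lemma's $\Omega_{VEV}$, with factor $\bar\omega(q_{i_e},\bar q_{i_e}q_{f_e})$, is tailored to the paper's convention \eqref{equ:KWonstatesG}, in which KW produces $\bar g_{i_e}g_{f_e}$. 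In that convention the computation closes \emph{exactly}: by Eq.~\eqref{eq:tgigfbar} with trivial $\sigma$, the edge $N$-part of the target is $\bar m_{i_e}m_{f_e}\,\bar\omega(\bar q_{i_e},q_{i_e})\,\omega(\bar q_{i_e},q_{f_e})$, and the single identity $\omega(a,\bar ab)=\omega(a,\bar a)\,\bar\omega(\bar a,b)$ shows this equals $\bar m_{i_e}m_{f_e}\,\bar\omega(q_{i_e},\bar q_{i_e}q_{f_e})$ term by term---no relabeling $m_v\to m_v\phi(q_v)$, no residual. In your convention, after absorbing the single-variable pieces, you are left with $R(a,b)=\omega(\bar a,b)\,\bar\omega(a,\bar b)$ (with $a=q_{i_e}$, $b=q_{f_e}$), and this is \emph{not} removable in general: a relabeling of the $m_v$ sum can only cancel gradients, i.e.\ functions obeying $R(q_1,q_2)R(q_2,q_3)=R(q_1,q_3)$, and centrality does not enforce this. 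A concrete counterexample is $G=\ZZ_8$ as the central extension of $Q=\ZZ_4$ by $N=\ZZ_2$ with the carry cocycle $\omega(x,y)=\lfloor (x+y)/4\rfloor$ (a perfectly valid nil-2 input for the Lemma): there $R(1,2)=R(2,3)=R(1,3)=1$, so the gradient condition fails. Your fallback of passing to a bilinear representative also fails here, because the nontrivial class in $H^2(\ZZ_4,\ZZ_2)$ has no bilinear representative (any bilinear cocycle $\lambda ab \bmod 2$ produces an exponent-$4$ abelian group, i.e.\ $\ZZ_4\times\ZZ_2$, never $\ZZ_8$). Your checks on $D_4$, $Q_8$ passed only because there $Q=\ZZ_2^2$, where every element is an involution and the two conventions coincide; they are blind to exactly the error at issue.

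The fix is simply to align conventions: either redo your computation with domain walls $\bar g_{i_e}g_{f_e}$ as in \eqref{equ:KWonstatesG} (whereupon your own cocycle identity finishes the proof immediately), or keep your convention but replace the entangler by its conjugate under vertex inversion, i.e.\ multiplication by $\bar\omega(\bar q_{i_e},q_{i_e}\bar q_{f_e})$, which does match your target up to the relabeling $m_v\to m_v\,\bar\omega(\bar q_v,q_v)$. For comparison, the paper avoids coordinates entirely: it observes that $\KWd^N_{EP}\ket{+}_P^N=\KW^N_{EV}\ket{+}^N_V$ (both prepare the $N$ toric code), identifies $\Omega_{VEV}\KW^N_{EV}=\KW^{N\triangleleft G}_{EV}$ as the central case ($\Sigma_{EV}=\mathbbm 1$) of Eq.~\eqref{eq:KWNGdef}, and then invokes the two-step gauging property \eqref{eq:2step} together with $\KW^G_{EV}\ket{+}^G_V=\ket{\mathcal D(G)}_E$. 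Once your conventions are consistent, your direct computation is a legitimate self-contained alternative to that structural argument, and your final stabilizer check is a sensible independent safeguard.
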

\footnotetext[7]{Again, bars denote group inverses.}

We defer the proof that the resulting state is indeed exactly the ground state of $\mathcal D(G)$ to Appendix \ref{app:nil2prepproof}, where we use properties of KW maps for normal subgroups, developed in Sec.~\ref{sec:KWNinG}. Using the above result, we are able to show that

\begin{theorem}
\begin{shaded}
The protocol Eq.~\eqref{eq:nil2} for preparing a nil-2 quantum double can be performed in a single shot.
\end{shaded}
\end{theorem}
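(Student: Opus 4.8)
The plan is to establish the three defining ingredients of a single-shot protocol, and to show that the post-selection $\bra{+}_{VP}$ appearing in the Lemma can be traded for a single measurement layer plus feedforward. Concretely I must exhibit (i) a finite-depth unitary acting before any measurement, (ii) a single round of mutually commuting single-site measurements, and (iii) a finite-depth feedforward correction that deterministically repairs \emph{every} measurement branch, not just the $\ket{+}$ outcome. Ingredients (i) and (ii) are essentially bookkeeping, so the real content is (iii).

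For (i), the pre-measurement circuit $CX^Q_{VE}\,\Omega_{VEV}\,CZ^N_{PE}$ is assembled from geometrically local gates---$CX^Q$ and $CZ^N$ couple an edge to an adjacent vertex or plaquette, while $\Omega_{VEV}$ couples an edge to its two endpoints---so a constant-color decomposition of the lattice organizes them into a constant-depth circuit. For (ii), I would observe that $\bra{+}_{VP}$ is implemented by measuring each vertex in the $Q$-character basis and each plaquette in the $N$-character basis; as these are single-site measurements they commute and constitute one layer.

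The heart of the proof is (iii). I would argue that a generic outcome yields the state $\ket{\mathcal D(G)}_E$ dressed by a collection of excitations whose locations are fixed by the recorded outcomes, and that every such excitation is \emph{Abelian} and hence annihilable in pairs by a depth-one string. A non-trivial vertex outcome labels a $Q$-charge; because $Q$ is Abelian this is an Abelian anyon, removed by a string of $Z^q$ along direct-lattice edges. A non-trivial plaquette outcome labels an $N$-flux; here I would invoke that the extension Eq.~\eqref{eq:centralextension} is \emph{central}, so each $n\in N$ is its own conjugacy class, the flux is Abelian, and it is removed by a string of $X^n$ along dual-lattice edges. The key depth bound is that each correction is a \emph{product of single-site operators acting on disjoint tensor factors}: the $Z^q$ act on the $\CC[Q]$ factor of the edges and the $X^n$ on the $\CC[N]$ factor (centrality makes left and right multiplication by $n$ coincide, so distinct-edge operators commute and the $X^n$ leave the $Q$-data untouched). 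Hence all charge- and flux-corrections mutually commute---this is the concrete content of their trivial braiding---and assemble into a single depth-one feedforward layer, returning the clean ground state regardless of outcome.

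The main obstacle is precisely verifying that no measurement branch ever produces a genuinely non-Abelian excitation, since such an anyon cannot be paired up by any finite-depth operation \cite{Shi19} and is exactly what forces the sequential multi-shot protocols when one gauges a general solvable $G$. This is where centrality of $N$ is indispensable: it is what guarantees the $N$-fluxes are Abelian and that their string operators are products of commuting single-site gates, so the obstruction of Sec.~\ref{sec:KWG} is absent. Making the identification of outcomes with the charge and flux sectors of $\mathcal D(G)$ under the $\Omega$-twist precise is the one genuinely load-bearing step; everything else reduces to locality-and-depth bookkeeping.
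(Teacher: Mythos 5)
Your proposal is correct and follows essentially the same route as the paper's own proof: postpone all measurements into a single layer, identify vertex and plaquette outcomes with Abelian $Q$-charges and Abelian $N$-fluxes, and repair every branch with commuting feedforward strings of $Z^q_e$ on the direct lattice and $X^n_e$ on the dual lattice, which can be applied independently because they act on disjoint subspaces. Your added emphasis on centrality of $N$ (singleton conjugacy classes, hence Abelian fluxes) is exactly the justification the paper supplies in its surrounding Lagrangian-subgroup discussion, so nothing essential differs.
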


\begin{proof}
 We postpone all measurements in the protocol until the very end. Since the KW maps act on different subspaces, it is possible to correct the measurement outcomes independently. Namely, measurement outcomes of the vertices and plaquettes correspond to charges $q \in Q$ and fluxes of $n\in N$, which can be paired up using solid strings of $Z_e^q$ and dotted strings of $X_e^n$, respectively.
\end{proof}
The interpretation of the protocol is as follows. The first layer (had the measurements on the plaquettes been immediately performed) gauges the symmetric product state on $N$, thus preparing the $N$-toric code. In the second step, $\Omega_{VEV}$ turns the toric code into a Symmetry-Enriched Topological (SET) state protected by $Q$. In particular, the charges of the toric code are fractionalized by the symmetry $Q$, and the fractionalization is given precisely by the cocycle $\omega \in H^2(Q,N)$\footnote{In fact, the cocycle $\omega(q_{i_e},\bar q_{i_e}q_{f_e})$ is exactly what appears in the group cohomology construction of 1+1D SPTs in Ref.~\onlinecite{Chen_2013}.}. Finally, the last layer along with the measurement on the vertices gauges $Q$. 

A couple of remarks are in order. First, regardless of measurement outcome, the state always has $\mathcal D(G)$ topological order since the feedforward correction is pairing up Abelian anyons\footnote{To be precise, the pure state corresponding to a given measurement outcomes has topological order. On the other hand, if one discards the measurement outcomes, then the resulting mixed state cannot be said to have topological order. }. Second, it is possible to view the above protocol as gauging an $N\times Q$ decorated domain wall SPT state\cite{decorateddomainwalls}, where a 1+1D $Q$-SPT state is decorated on $N$ domain walls. We elaborate on this point in Appendix~\ref{app:DDW}.

Lastly, we make contact to a sufficient condition in Ref.~\onlinecite{oneshot} that any anyon theory that admits a Lagrangian subgroup can be prepared in one shot. A Lagrangian subgroup $A$ is a subset of Abelian bosons in the topological order that are closed under fusion, have trivial mutual statistics, and that every other anyon braids non-trivially with at least one of the anyons in the subgroup\cite{KS11,Levin13}. We first recall that the pure charges and fluxes of $\mathcal D(G)$ are labeled by irreps and conjugacy classes of $G$, respectively. There are two natural classes of Abelian anyons when $G$ is nil-2. First, since $Q$ is Abelian, irreps of $Q$ are all one-dimensional, thus they pullback to Abelian charges in $G$. Second, since $N$ is in the center of $G$, conjugacy classes of $N$ remain one-dimensional upon being pushed forward to $G$, giving Abelian fluxes. Moreover, these two classes of anyons have mutual braiding statistics, which follows from the exactness of the sequence \eqref{eq:centralextension}: pulling irreps from $Q$ back to $N$ gives the trivial irrep. This can be interpreted physically as the fact that $Q$ charges are transparent to the $N$ fluxes. Together, these fluxes and charges together form a subgroup $A=N\times Q$. Lastly, a sufficient condition to verify that the subgroup is Lagrangian is that the size of the group is equal to the total quantum dimension of the theory. Indeed, one has $|A| = |G|$. To conclude, for the quantum double of a nil-2 group, conjugacy classes of $N$ and irreps of $Q$ form a Lagrangian subgroup of $\mathcal D(G)$. These are exactly the anyons we measure to prepare the state in one shot.
\begin{table*}[t!]
    \centering
    \begin{tabular}{|c|c|c|c|}
    \hline
    Group & Nil-2 & Metabelian & Solvable\\
    \hline
    $\KW^G$ & $\KW^Q \times   \Omega \times  \KW^N$ \eqref{eq:2stepnil2}    & $\KW^Q \times \mathcal U^{N \triangleleft G}\times  \KW^N $  \eqref{eq:2stepfull}
         & $\prod_{j=l_G}^{1} \KW^{{N_{j}\over N_{j-1}}\triangleleft{ G\over N_{j-1}}}$ \eqref{eq:sequentialgauging}\\
         \hline
    \end{tabular}
    \caption{Decomposition of $\KW^G$ in terms of unitaries and Abelian KW operators, the latter of which can be performed using unitaries and one round of measurement. For the solvable case, $N_j$ are subgroups in the derived series of $G$, and $\KW^{N \triangleleft G} = \mathcal U^{N \triangleleft G} \KW^N$ is the KW map to gauge a normal subgroup $N$ of $G$ (defined in Eq.~\eqref{equ:KWonstatesNG}) and $\mathcal U^{N \triangleleft G} = \Sigma^{-1} \times \Omega$, where $\Sigma$ and $\Omega$ are unitary operators that depend on the the factor system of the group extension. When $l_G=2$, the result reduces to the metabelian case. When the group extension of the metabelian group is central, $\Sigma = \mathbbm 1$ and the result reduces to the nil-2 case.}
    \label{tab:KWsummary}
\end{table*}

\begin{table*}[]
    \centering
     \begin{tabular}{|c|c|c| c|c|}
    \hline
    $G$ & $N$ & $Q$ & $\sigma$ & $\omega$ \\
    \hline
    $S_3$ & $\ZZ_3 = \{0,1,2\}$ &$\ZZ_2 = \{0,1\}$ & $\sigma^1[n]=-n$ & $\omega(q_1,q_2)=1$  \\
    \hline
    $D_4$ & $\ZZ_2 =\{0,1\} $ & $\ZZ_2^2 = \{0,1\}^{2}$ & $\sigma^q =1$ & $\omega((a_1,b_1),(a_2,b_2)) = a_1b_2$\\
    \hline
    $Q_8$ & $\ZZ_2= \{0,1\}$ & $\ZZ_2^2 = \{0,1\}^{2}$ & $\sigma^q =1$ & $\omega((a_1,b_1),(a_2,b_2)) = a_1a_2 +a_1b_2 + b_1b_2$ \\
    \hline
    \end{tabular}
    \caption{Examples of group extensions and factor systems for metabelian groups. Here, we assume addition as the group multiplication for the normal and quotient subgroups. The map $\KW^G$ for such groups can be implemented using the decomposition in Eq.~\eqref{eq:2stepfull}}
    \label{tab:metabelianexamples}
\end{table*}

\section{Finite-shots: Kramers-Wannier for solvable groups}\label{sec:finiteshots}
In previous work \cite{measureSPT}, we argued that quantum doubles corresponding to solvable groups---groups that arise from recursively extending Abelian groups---can be prepared in finite time. In the present work, we show that this can be extended to the space of maps: one can implement the Kramers-Wannier map $\KW^G$ for any solvable group $G$ in finite time using measurements and feedforward. Applying it to the symmetric product state then gives the $G$ quantum double, constituting the first explicit protocol for splitting-simple solvable groups. Moreover, by applying it to $G$-SPT states yields all twisted Abelian quantum doubles.

\subsection{Statement of the results}

In this subsection, we present the results, which are then derived and explained in the remainder of this section.
Rather than jumping straight to the result for arbitrary solvable groups, we offer two stepping stones where we introduce the ingredients necessary for the general case, as summarized in Table~\ref{tab:KWsummary}. When we write $\KW^G$ for a non-Abelian group $G$, we refer to the particular gauging map which is defined by the Kramers-Wannier transformation; we review it in more detail in Sec.~\ref{sec:KWG}.

\subsubsection{Two-shot protocol for gauging nil-2 groups}
Let us first consider a group $G$ of nilpotency class two, as in Sec.~\ref{sec:nil2}. This means that $G$ is obtained via a central extension \eqref{eq:centralextension} involving the normal subgroup $N$ and quotient group $Q$. In Sec.~\ref{sec:nil2}, we prepared the quantum double $\mathcal D(G)$ in a single shot using Eq.~\eqref{eq:nil2}. The approach used there essentially involved gauging\footnote{This is most explicit by writing Eq.~\eqref{eq:centralextension} as gauging $N \times Q$ after applying a particular SPT-entangler (see Appendix~\ref{app:nil2}); this is also called a choice of defectification class of the gauging map \cite{BarkeshliBondersonChengWang2019}.} a product group $N \times Q$ and thus cannot be used to gauge $G \ncong N \times Q$ for a generic $G$-symmetry input state.

Indeed, although we could prepare the $G$ quantum double in a single shot, here we only find a two-shot protocol for gauging $G$. One way to naturally motivate this is by first considering an alternate protocol for the aforementioned quantum double. Indeed, instead of starting with measuring the plaquette term to prepare the $N$-toric code using $\KWd^N_{PE}$ (as in Eq.~\eqref{eq:KWdabelian}), we can measure the vertex terms of the toric code using $\KW^N_{EV}$. This has two consequences. First, since the charges carry fractional charge under $Q$, they become non-Abelian anyons in the quantum double; this forces us to apply feedforward after this first gauging step (when charges are still Abelian), making it a two-shot protocol. Second, now the input state is $\ket{+}^N_V \otimes \ket{+}^Q_V \cong \ket{+}^G_V$, which can be interpreted as a $G$-symmetric state, suggesting that we can indeed interpret this approach as gauging the $G$ symmetry of the product state (which is a well-known way of producing $\mathcal D(G)$). Indeed, later in this section we will see that this protocol gauges the $G$ symmetry for any $G$-symmetric input state, i.e.:
\begin{align}
    \KW^G_{EV} = \KW^Q_{EV}  \Omega_{VEV} \KW^N_{EV}, \label{eq:2stepnil2}
\end{align}
where we remind the reader that $\KW$ for Abelian groups is defined in Eq.~\eqref{eq:KWabelian} and the unitary $\Omega_{VEV}$ in Eq.~\eqref{eq:OmegaVEV}.

\subsubsection{Two-shot protocol for gauging metabelian groups}

Next, we consider the most general type of non-Abelian groups that are obtained by a single Abelian extension of an Abelian group, called metabelian groups. In contrast to the aforementioned case of nil-2 groups (which is more restrictive), the extension \eqref{eq:centralextension} defining a metabelian group need not be central.

Hence, to characterize the extension that gives rise to $G$, one must specify---in addition to the 2-cocycle $\omega$---how $Q$ acts on $N$. This is given by a map $\sigma: Q \rightarrow \text{Aut}(N)$. That is, for a fixed $q$, $\sigma^q: N \rightarrow N$ is an automorphism.
The multiplication law is now
\begin{align}
    (n_1,q_1) \times (n_2,q_2) = (n_1\aut{q_1}{n_2}\omega(q_1,q_2),q_1q_2)
    \label{eq:multiplicationlaw}
\end{align}
and associativity demands that $\omega$ satisfies the cocycle condition
\begin{equation}
    \aut{q_1}{\omega(q_2,q_3)} \omega(q_1,q_2q_3)  = \omega(q_1,q_2) \omega(q_1q_2,q_3).
    \label{equ:cocyclecondition}
\end{equation}
The pair $\sigma$ and $\omega$ is together called a \textit{factor system} (see Appendix \ref{app:factor} for a derivation of the above properties). For instance, $S_3 \cong \mathbb Z_3 \rtimes \mathbb Z_2$ is a (split) extension with the normal subgroup $N=\mathbb Z_3 = \{0,1,2\}$ and quotient group $Q=\mathbb Z_2 = \{ 0,1\}$, with addition as the group multiplication in the normal and quotient subgroups. while the cocycle is trivial, the extension has a non-trivial automorphism $\sigma^1[n] = -n$. Examples for other metabelian groups can be found in Table~\ref{tab:metabelianexamples}.

Like the nil-2 case above, we present a two-shot protocol for gauging such a metabelian group, using the above data $\omega$ and $\sigma$. In particular, we claim that $\KW^{G}$ can be prepared by inserting a specific FDLU $\mathcal U^{N \triangleleft G}$ between the two KW maps: 
\begin{lemma}
\begin{shaded}
For an arbitrary group extension, we have the following identity:
    \begin{align}
    \KW^G_{EV} = \KW^Q_{EV} \mathcal U^{N \triangleleft G}_{EV} \KW^N_{EV} \label{eq:2stepfull}
\end{align}
where $\mathcal U^{N \triangleleft G}$ is the FDLU
\begin{align}
     \mathcal U^{N \triangleleft G}_{EV} &= \Sigma_{EV}^{-1} \Omega_{VEV}
     \label{eq:UasFDLU}
\end{align}
with $\Omega_{VEV}$ defined in Eq.~\eqref{eq:OmegaVEV} and the action of $\Sigma_{EV}$ is given by
\begin{align}
    \Sigma_{EV} \ket{\{q_v\},\{n_e\}}= \ket{\{q_v\},\{\sigma^{q_{i_e}}[n_e]\}}.
\end{align}
\end{shaded}
\label{lem:metabelian}
\end{lemma}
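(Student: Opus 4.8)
The plan is to prove the operator identity \eqref{eq:2stepfull} by verifying that both sides act identically on an arbitrary vertex basis state $\ket{\{g_v\}}_V$, writing each $g_v=(n_v,q_v)$ in the factor-system coordinates of the multiplication law \eqref{eq:multiplicationlaw}. The key structural observation is that, under the isomorphism $\CC[G]\cong\CC[N]\otimes\CC[Q]$ on both vertices and edges, the three ingredients act on complementary factors: $\KW^N_{EV}$ gauges only the $N$-component (treating $\{q_v\}$ as spectators and projecting the vertex $N$-factor onto $\bra{+}^N_V$), the intermediate FDLU $\mathcal U^{N\triangleleft G}_{EV}=\Sigma^{-1}_{EV}\Omega_{VEV}$ acts on the edge $N$-variable controlled by the surviving vertex $Q$-variables, and $\KW^Q_{EV}$ then gauges the $Q$-component. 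Since $\ket{+}^G=\frac{1}{|G|}\sum_{g}\ket{g}=\ket{+}^N\otimes\ket{+}^Q$, the vertex projection $\bra{+}^G_V$ on the left factorizes as $\bra{+}^N_V\otimes\bra{+}^Q_V$, so the two sides manifestly agree on the vertices and the entire content of the lemma becomes the claim that they produce the same $G$-valued gauge field on each edge.

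Concretely, I would first compute the right-hand side edge by edge. Recalling the Abelian construction \eqref{eq:KWabelian}, applying $\KW^N_{EV}$ deposits the $N$-difference $\bar n_{i_e} n_{f_e}$ on edge $e$; acting with $\Omega_{VEV}$ as in \eqref{eq:OmegaVEV} multiplies this by $\bar\omega(q_{i_e},\bar q_{i_e}q_{f_e})$; applying $\Sigma^{-1}_{EV}$ then hits the result with the automorphism $\sigma^{\bar q_{i_e}}$; and finally $\KW^Q_{EV}$ deposits the $Q$-difference $\bar q_{i_e}q_{f_e}$. Thus the right-hand side places on edge $e$ the group element whose $Q$-part is $\bar q_{i_e}q_{f_e}$ and whose $N$-part is $\aut{\bar q_{i_e}}{\bar n_{i_e}n_{f_e}\,\bar\omega(q_{i_e},\bar q_{i_e}q_{f_e})}$.

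Next I would compute the left-hand side, using that $\KW^G_{EV}$ places the genuine $G$-difference $g_{i_e}^{-1}g_{f_e}$ on each edge (the non-Abelian analogue of \eqref{eq:KWabelian}, reviewed in Sec.~\ref{sec:KWG}). Expanding $g_{i_e}^{-1}g_{f_e}=(n_{i_e},q_{i_e})^{-1}(n_{f_e},q_{f_e})$ with \eqref{eq:multiplicationlaw} and the inverse $(n,q)^{-1}=(\aut{\bar q}{\bar n\,\bar\omega(q,\bar q)},\bar q)$, one finds the $Q$-part is $\bar q_{i_e}q_{f_e}$ and the $N$-part is $\aut{\bar q_{i_e}}{\bar n_{i_e}n_{f_e}\,\bar\omega(q_{i_e},\bar q_{i_e})}\,\omega(\bar q_{i_e},q_{f_e})$. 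The two expressions for the $N$-part coincide precisely by the cocycle condition \eqref{equ:cocyclecondition}: setting $(q_1,q_2,q_3)=(q_{i_e},\bar q_{i_e},q_{f_e})$ and using the normalization $\omega(1,\cdot)=\omega(\cdot,1)=1$ yields $\aut{\bar q_{i_e}}{\bar\omega(q_{i_e},\bar q_{i_e}q_{f_e})}=\aut{\bar q_{i_e}}{\bar\omega(q_{i_e},\bar q_{i_e})}\,\omega(\bar q_{i_e},q_{f_e})$, after which $\sigma$ being a homomorphism of the Abelian group $N$ lets the remaining factors be identified term by term. This closes the argument.

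The main obstacle I anticipate is purely the non-commutative bookkeeping: keeping a consistent convention for the edge orientation and for left- versus right-multiplication in the gauging gates, so that $\KW^N$ really produces $\bar n_{i_e}n_{f_e}$ rather than its inverse, and checking that the specific argument $\bar q_{i_e}q_{f_e}$ appearing inside $\omega$ in the definition \eqref{eq:OmegaVEV} of $\Omega$ is exactly the one selected by the cocycle identity. Two limiting checks guard against sign and ordering errors: when the extension \eqref{eq:centralextension} is central ($\sigma$ trivial, $\Sigma=\mathbbm 1$) the identity must collapse to the nil-2 relation \eqref{eq:2stepnil2}, and when in addition the extension splits with $\omega=1$ it must reduce to the statement that gauging a direct product factorizes. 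I would fix all conventions using these limits before presenting the general computation.
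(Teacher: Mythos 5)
Your proposal is correct and follows essentially the same route as the paper: the paper also proves the identity by direct computation on group-element basis states, factoring it into the two-step property \eqref{eq:2step} (verified in Sec.~\ref{sec:KWNinG} using $\pi(\bar g_{i_e})\pi(g_{f_e})=\pi(\bar g_{i_e}g_{f_e})$) and the statement $\KW^{N\triangleleft G}=\mathcal U^{N\triangleleft G}\KW^N$, which Appendix~\ref{app:KWN} establishes by exactly your edge-by-edge computation --- expanding $t(\bar g_{i_e}g_{f_e})$ via the factor system as in Eq.~\eqref{eq:tgigfbar} and simplifying with the cocycle condition \eqref{equ:cocyclecondition} at the same substitution $(q_1,q_2,q_3)=(q_{i_e},\bar q_{i_e},q_{f_e})$. (The appendix additionally packages this as a check that $\mathcal U^{N\triangleleft G}\KW^N$ has the correct right and left kernels, Eqs.~\eqref{eq:Q0form} and \eqref{eq:repNd-1form}, but the algebraic core is identical to yours.)

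One caveat: the lemma is stated for an \emph{arbitrary} group extension, and the paper's formulas are ordered so as to remain valid when $N$ is non-Abelian, whereas several of your steps silently assume $N$ Abelian and $\sigma$ a homomorphism --- namely the inverse formula $(n,q)^{-1}=\bigl(\aut{\bar q}{\bar n\,\bar\omega(q,\bar q)},\bar q\bigr)$, the identification of $\Sigma^{-1}$ with $\sigma^{\bar q_{i_e}}$, and moving $\bar\omega$ factors freely through $\sigma$ and past $\bar n_{i_e}n_{f_e}$. For general $N$ one only has $\sigma^{q_1}\circ\sigma^{q_2}=c^{\omega(q_1,q_2)}\circ\sigma^{q_1q_2}$ with $c$ an inner automorphism, the inverse reads $(n,q)^{-1}=\bigl(\bar\omega(\bar q,q)\,\aut{\bar q}{\bar n},\bar q\bigr)$, and the edge variable is $t(\bar g_{i_e}g_{f_e})=\bar\omega(\bar q_{i_e},q_{i_e})\,\aut{\bar q_{i_e}}{\bar n_{i_e}n_{f_e}}\,\omega(\bar q_{i_e},q_{f_e})$ with that specific ordering. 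Your argument therefore fully covers the metabelian and solvable applications the paper actually needs ($N$ Abelian), but to obtain the lemma at its stated generality you would need to redo the non-commutative bookkeeping with these orderings, as the paper's appendix does.
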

 We provide a physical intuition for the role of $\mathcal U^{N \triangleleft G}$ both from the point of view of a basis transformation, and as an entangler that symmetry-enriches the input state in Sec.~\ref{sec:Uphysical}. The full proof is found in Appendix~\ref{app:KWN}.

The above result implies that for a metabelian group $G$ where $Q$ and $N$ are Abelian, we can implement $\KW^G$ using two rounds of measurement. Applying  $\KW^G$ to the $G$-symmetric product state prepares $\mathcal D(G)$ for any metabelian group. For $N=\ZZ_3$ and $Q=\ZZ_2$, the protocol matches our previous proposal to prepare $\mathcal D(S_3)$ in Ref.~\onlinecite{Rydberg}, and also agrees with Ref.~\onlinecite{Bravyi22} (up to an appropriate inversion of group elements) which treated the case of a split extension (i.e., for a trivial cocycle $\omega$). We further remark that in the special case of a central extension (where $\sigma$ is trivial), we have $\Sigma_{EV} = 1$ and thus recover the nil-2 protocol in Eq.~\eqref{eq:2stepnil2}.

In fact, for the remainder of this section, we find it convenient to define $\KW^{N \triangleleft G}$ as the KW map for gauging a normal subgroup $N$ of $G$\footnote{In this notation, $\KW^{G\triangleleft G} = \KW^G$.}. Namely, it is the canonical choice for which the following property for two-step gauging holds
\begin{align}
    \KW^G = \KW^Q \KW^{N \triangleleft G}
    \label{eq:2step}
\end{align}
Intuitively, the map $\KW^{N \triangleleft G}$ gauges $N$ in such a way that it leaves the action of the quotient group $Q$ untouched, which implies that $Q$ can be sequentially gauged as is (see Eq.~\eqref{equ:KWonstatesNG} in Sec.~\ref{sec:KWNinG} for a proper definition.). Hence, the non-trivial result in Eq.~\eqref{eq:2stepfull} boils down to the statement that
\begin{align}
    \KW^{N \triangleleft G} = \mathcal U^{N \triangleleft G} \KW^N. \label{eq:KWNGdef}
\end{align}
We will use the above form for $\KW^{N \triangleleft G}$ moving forward.

\subsubsection{Finite-shot protocol for gauging solvable groups}

To state our most general result, let us first briefly review some useful notions about solvable groups. A derived series is a set of normal subgroups $\tilde N_i \triangleleft G$,  defined inductively by
\begin{equation}
\begin{split}
    \tilde N_0 &= G\\
    \tilde N_j' &= [\tilde N_{j-1},\tilde N_{j-1}] ; \ \ \ j>0
    \label{eq:derivedseries}
    \end{split}
\end{equation}
where $[\tilde N,\tilde N]$ is the commutator subgroup of $\tilde N$. The smallest natural number $l_G$ where $\tilde N_{l_G}= \ZZ_1$ is called the \textit{derived length} of the group. A solvable group is defined as a group where $l_G$ is finite. For example, Abelian groups, and metabelian groups correspond to $l_G=1$ and $l_G=2$, respectively.

To simplify the notation, we now define $N_j = \tilde N_{l_G-j}$ so that $N_0=\ZZ_1$ and $N_{l_G}=G$. We also remark that using the fact that $N_j$ are commutator subgroups, one can show that for the derived series, $N_i \triangleleft N_k$ for all $i<k$, and that $N_{j+1}/N_j$ are all Abelian.

Based on the derived series, we claim that $\KW^G$ can be implemented in exactly $l_G$ rounds of measurement, which we moreover expect to be optimal. Specifically:
\begin{theorem}
\begin{shaded}
The Kramers-Wannier map for any solvable group $G$ with derived length $l_G$ can be implemented using a finite-depth unitary and $l_G$ measurement layers (interspersed with feedforward) as:
\begin{align}
    \KW^G &=\prod_{j=l_G}^{1} \KW^{{N_{j}\over N_{j-1}}\triangleleft{ G\over N_{j-1}}} \\
    &= \KW^{G \over N_{l_G-1}}   \KW^{{N_{l_G-1} \over N_{l_G-2}}  \triangleleft {G \over N_{l_G-2}}}   \cdots   \KW^{{N_{2} \over N_{1}}  \triangleleft {G \over N_{1}}}   \KW^{N_{1} \triangleleft G}, \nonumber \label{eq:sequentialgauging}
\end{align}
where each $\KW^{N \triangleleft G}$ can be performed in a single shot using Eq.~\eqref{eq:KWNGdef}.
\end{shaded}
\label{thm:Gsolvable}
\end{theorem}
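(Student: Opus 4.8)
The plan is to prove the decomposition by induction on the derived length $l_G$, repeatedly applying a generalized form of the two-step gauging identity \eqref{eq:2step}. The base case $l_G=1$ is just the Abelian map \eqref{eq:KWabelian}, which is single-shot. The key technical ingredient I would first establish is a \emph{tower} version of \eqref{eq:2step}: for any chain $M \triangleleft N \triangleleft G$ in which both $M$ and $N$ are normal in $G$, one has
\begin{align}
    \KW^{N \triangleleft G} = \KW^{(N/M) \triangleleft (G/M)} \, \KW^{M \triangleleft G}. \label{eq:towergauging}
\end{align}
This recovers \eqref{eq:2step} when $M=\ZZ_1$ and $N=G$, and it says that gauging the normal subgroup $N$ may be split into first gauging the smaller normal subgroup $M$ inside $G$, and then gauging the quotient $N/M$ inside the reduced group $G/M$. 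The defining property of $\KW^{N\triangleleft G}$ given around \eqref{equ:KWonstatesNG}---namely that it gauges $N$ while leaving the residual $G/N$ action intact---is precisely what makes this sequential gauging well-defined at each stage.

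Granting \eqref{eq:towergauging}, the theorem follows by peeling the derived series off one layer at a time. Starting from $\KW^G = \KW^{N_{l_G} \triangleleft G}$ and applying \eqref{eq:towergauging} with $M = N_{l_G-1}$, $N = N_{l_G}$ yields $\KW^G = \KW^{G/N_{l_G-1}}\,\KW^{N_{l_G-1}\triangleleft G}$; iterating on the trailing factor $\KW^{N_j \triangleleft G}$ with $M = N_{j-1}$ produces exactly the ordered product in the statement, terminating at $\KW^{N_1 \triangleleft G}$ since $N_0 = \ZZ_1$. Because each $N_j$ is a characteristic subgroup of $G$ (being an iterated commutator subgroup) it is automatically normal in $G$, so the hypotheses of \eqref{eq:towergauging} hold at every step; and because each successive quotient $N_j/N_{j-1}$ is Abelian, each factor $\KW^{(N_j/N_{j-1}) \triangleleft (G/N_{j-1})}$ gauges an \emph{Abelian} normal subgroup. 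By \eqref{eq:KWNGdef} such a map is an FDLU $\mathcal U$ composed with the Abelian map $\KW^{N_j/N_{j-1}}$, which is single-shot via the charge-pairing argument of Sec.~\ref{sec:KWabelian}. There are exactly $l_G$ factors, hence $l_G$ measurement layers.

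The main obstacle I expect lies in establishing \eqref{eq:towergauging}, not in the subsequent induction, which is purely formal. The difficulty is the bookkeeping of factor systems: realizing $G/N_{j-1}$ as an extension of $G/N_j$ by $N_j/N_{j-1}$ requires choosing coset representatives consistently across the entire tower, and one must verify that the automorphism data $\sigma$ and the cocycle $\omega$ entering $\mathcal U^{N\triangleleft G}$ in Lemma~\ref{lem:metabelian} compose correctly---i.e. that the FDLU inserted at stage $j$ sees exactly the factor system of the induced extension. Concretely, I would prove \eqref{eq:towergauging} by acting on basis states and using the explicit forms of $\Omega_{VEV}$ and $\Sigma_{EV}$, checking that the intermediate Abelian charges created after gauging $M$ can be paired up by feedforward before they could fuse into non-Abelian objects at later stages. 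This ``pair up before proceeding'' mechanism is the conceptual heart of why the protocol must be sequential, and why $l_G$ shots are both sufficient and (plausibly) necessary.
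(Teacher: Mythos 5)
Your skeleton---iterate along the derived series and reduce each layer to an Abelian relative gauging, single-shot by Eq.~\eqref{eq:KWNGdef}---is the same as the paper's, but your route is genuinely different: you peel the series from the top, which forces you to decompose the \emph{relative} map $\KW^{N_j\triangleleft G}$ and hence requires your tower identity $\KW^{N\triangleleft G}=\KW^{(N/M)\triangleleft(G/M)}\,\KW^{M\triangleleft G}$ for chains $M\triangleleft N\triangleleft G$. The paper never needs any such statement. It inducts on the derived length and peels from the \emph{bottom}: apply the already-proven two-step identity \eqref{eq:2step} to the full group $G'$ with its deepest derived subgroup $N_1'$ as the gauged normal subgroup, $\KW^{G'}=\KW^{G'/N_1'}\KW^{N_1'\triangleleft G'}$, then apply the induction hypothesis to the quotient group $G'/N_1'$ and use the third isomorphism theorem to recognize each factor of that decomposition as the corresponding factor $\KW^{(N'_{j+1}/N'_j)\triangleleft(G'/N'_j)}$ for $G'$ itself. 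The only analytic input is thus \eqref{eq:2step} for full groups (Lemma~\ref{lem:metabelian}, Sec.~\ref{sec:KWNinG}), and the only identifications are the canonical ones of the isomorphism theorems, so no cross-level bookkeeping of coset representatives arises. If completed, your approach proves something slightly more general (sequential gauging of a nested pair of normal subgroups inside a fixed $G$); the paper's approach has the advantage that its hard part is already done.

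The caveat is that your key lemma is both unproven and more delicate than your sketch suggests. Writing $h=\bar g_{i_e}g_{f_e}$, the right-hand side of your identity places $\bigl(t_M(h),\,t_{N/M}(\pi_M(h))\bigr)$ on an edge, while the left-hand side, read through the natural identification $\CC[N]\cong\CC[M]\otimes\CC[N/M]$, $n\mapsto\bigl(t_M(n),\pi_M(n)\bigr)$, places $\bigl(t_M(t_N(h)),\,\pi_M(t_N(h))\bigr)$ there. The second components agree once one sets $s_N=s_M\circ s_{N/M}$, but the first components differ in general by an $M$-valued cocycle factor $\omega_M\bigl(t_{N/M}(\pi_M(h)),\,s_{N/M}(\pi_N(h))\bigr)$: for generic transversals your identity is simply false, holding only up to a controlled relabeling of the output registers. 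It becomes exact if the section $s_M:G/M\to G$ is chosen multiplicatively across the factorization $(N/M)\cdot\mathrm{im}(s_{N/M})$ of $G/M$; such a choice always exists, so your program can be completed, but proving this compatibility is precisely the missing content---it is not mere bookkeeping that can be waved through, and it is exactly what the paper's bottom-up induction sidesteps. Finally, a small slip: your identity does not recover \eqref{eq:2step} when $M=\ZZ_1$ and $N=G$ (that specialization is a tautology); \eqref{eq:2step} is the case $N=G$ with $M$ the gauged normal subgroup.
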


Intuitively, at round $j=1$, we begin by performing the Abelian KW for the normal subgroup $N_{1}$. This leaves a remaining quotient symmetry $G/N_{1}$. We now proceed inductively for $j=2,\ldots, l_G$. At round $j$, we have gauged the group $N_{j-1}$, so we proceed to gauge the group $N_{j}/N_{j-1}$ which is again an Abelian normal subgroup of the remaining symmetry $G/N_{j-1}$. This reduces the remaining symmetry to $G/N_{j}$. We repeat this until the entire symmetry $G$ is gauged.

\begin{proof}
We proceed by induction. The base case $l_G=1$ is trivial, and for $l_G=2$, Lemma \ref{lem:metabelian} gives the existence of a map $\KW^{N \triangleleft G}$ which satisfies the two-step gauging condition Eq.~\eqref{eq:2step}. 

Now we proceed to the induction step. Suppose that we have proven Eq.\eqref{eq:sequentialgauging} for derived length $l_G$,  consider a group $G'$ of derived length $l_{G'} = l_{G}+1$ with derived series $N'_{j}$. 
Let $G= G'/N_{1}'$, and $N_j=N_{j+1}'/N_1'$. From the third isomorphism theorem, we have that
\begin{align}
    \frac{G}{N_{j-1}} &= \frac{G'/N_{1}'}{N'_{j}/N'_{1}} = \frac{G'}{N'_{j}}, & \frac{N_j}{N_{j-1}} &=\frac{N_{j+1}'/N_{1}'}{N'_{j}/N'_{1}} = \frac{N_{j+1}'}{N'_{j}}.
\end{align}
Then,

\begin{align}
\begin{split}
  \prod_{j=l_{G'}}^{1} \KW^{{N'_{j}\over N'_{j-1}}\triangleleft{ G'\over N'_{j-1}}} &= \left[\prod_{j=l_{G}}^{1} \KW^{{N'_{j+1}\over N'_{j}}\triangleleft{ G'\over N'_{j}}}\right] \KW^{N_1' \triangleleft G'} \\
&= \left[\prod_{j=l_{G}}^{1} \KW^{{N_{j}\over N_{j-1}}\triangleleft{ G\over N_{j-1}}}\right] \KW^{N_1' \triangleleft G'}\\
&=\KW^G \KW^{N_1' \triangleleft G'}\\
& = \KW^{\frac{G'}{N_1'}} \KW^{N_1' \triangleleft G'} =\KW^{G'}  
\end{split}
\end{align}
as desired.
\end{proof}

The remaining of this section is devoted to deriving properties of $\KW^{N \triangleleft G}$ Due to its length, we provide a brief summary. Sec.~\ref{sec:KWG} reviews the KW map for an arbitrary finite group $G$, and discusses the subtleties of implementing the map with FDLU and measurements for non-Abelian groups. In Sec.~\ref{sec:KWNinG}, we give a prescription of the $\KW^{N \triangleleft G}$ and show that for Abelian $N$, this map can be implemented with FDLU and a single round of measurement and feedforward.
In. Sec.~\ref{sec:Uphysical}, we provide intuition on how and why $\KW^{N \triangleleft G}$ differs from $\KW^{N}$ from the viewpoint of symmetries.
As an application, we show in Sec.~\ref{sec:QDprep} that inputting the symmetric product state into $\KW^G$ prepares the ground state of $\mathcal D(G)$, proving that the state can be prepared with $l_G$ rounds of measurements\footnote{Though it is worth noting that to specifically prepare the ground state of $\mathcal D(G)$ (rather than implementing the map $\KW^G$), it is possible for the number of shots to be lowered, as we have demonstrated for the nil-2 case where the derived length is two but the state can be prepared in a single shot.}.

\subsection{Review of KW map for $G$}\label{sec:KWG}

The Kramers-Wannier map for $G$ in $d$-spatial dimensions ($\KW^G$) is defined as a non-local transformation that maps between states with $G$ $0$-form symmetry to states in a $G$ gauge theory with $\Rep(G)$ $(d-1)$-form symmetry \cite{Yoshida17, Lootens21Duality}. Note that unlike Abelian groups, there is no natural generalization of $\KWd$, since there is no isomorphism between $G$ and $\Rep(G)$.

We begin with an arbitrary directed graph with $G$ degrees of freedom $\CC[G]$ on both vertices and edges with basis vectors given by group elements $\ket{g}_v$ and $\ket{g}_e$. At the level of states, the KW map acts by mapping
vertex degrees of freedom to edge degrees of freedom by heuristically mapping
spin variables on vertices to domain-wall variables on edges. The output of an edge $g_e$ is determined by the two vertices at the boundary where the initial and final vertices are denoted $i_e$ and $f_e$ respectively. That is $\KW^G: \mathbb C[G]^{\otimes N_v} \rightarrow \mathbb C[G]^{\otimes N_e}$ which acts as 
\begin{align}
\bigotimes_v \ket{g_v}_v \xmapsto[]{\KW^G} \bigotimes_e \ket{\bar g_{i_e} g_{f_e}}_e  
\label{equ:KWonstatesG}
\end{align}
where $\bar g$ denotes the inverse of $g$. Based on this definition, we can work out how operators map . First, we define the generalization of Pauli $X$ and $Z$ to finite groups\cite{Brell2015,Albert21}. The Pauli $X$ is generalized to left and right multiplication for each group element $g$
\begin{align}
L^g \ket{h} &= \ket{gh}, & R^g\ket{h} = \ket{h\bar g}.
\end{align}
Let $\bs \rho^\mu$ be the matrix representation for each irrep $\mu$ of $G$. 
The generalization of Pauli $Z$ is no longer a single site operator, but now a matrix product operator with bond dimension $d^\mu$, the dimension of the irrep $\mu$. Namely,
\begin{align}
   \bs Z^\mu_{ij} \ket{g} = \bs  \rho^\mu(g)_{ij} \ket{g}
\end{align}
where $i,j = 1,\ldots, d^\mu$ denote the bond indices of $\bs Z$. Here notation-wise, we also use a bold font to denote that the bond indices have not been contracted, and physical operators must be defined by fully contracting such indices. 

The defining property of the KW map is that it is not unitary. More specifically, it has non-trivial right and left kernels, which we denote as the ``symmetry" and ``dual symmetry". respectively. From the map \eqref{equ:KWonstatesG} it is apparent that performing left multiplication on all vertices will leave the output state invariant, and any product of $\bs Z$ around a closed loop is unchanged by an arbitrary input state. In equations,
\begin{align}
    \KW^G \times \left(\prod_v L^g_v\right)  = \KW^G\label{eq:G0form}\\
   \Tr \left[\prod_{e \in l}  \bs  Z^{\mu^{O_e}}_e\right] \times \KW^G = d^\mu\KW^G 
    \label{eq:repGd-1form}
\end{align}
The former is precisely  the $G$ 0-form symmetry, while the latter is given for an arbitrary closed loop $l$ ($\partial l=0$) and $O_e$ denotes the orientation of $e$ with respect to the loop, which conjugates a given representation if the orientation of $e$ goes against that of the loop. If $l$ is contractible, then it defines a local constraint, which can be thought of as a gauge constraint, while for non contractible loops, this defines a $\Rep(G)$ $(d-1)$-form symmetry. For non-Abelian $G$, this symmetry operator is not unitary and not onsite if $d^\mu > 1$, and is therefore a generalized notion of symmetry called a non-invertible or categorical symmetry (see \cite{Mcgreevy22,Cordova22Snowmass} for a review and further references). In this case, the $\Rep(G)$ symmetry has an intuitive interpretation as string operators whose end points are gauge charges for the quantum double of $G$ \cite{Bultinck17}. The fusion rules for such symmetries correspond precisely to the fusion rules for irreps of $G$.

Similar to the cluster state for Abelian groups, it is possible to represent $\KW^G$ as a tensor-network operator
\begin{align}
    \KW^G_{EV} = \bra{+}^G_V U_{EV}^G \ket{1}^G_E
\end{align}
where $U_{EV}^G $ is a unitary that generalizes the cluster state entangler to $G$ degrees of freedom\cite{Brell2015} (see Appendix~\ref{app:Gclustestate} for further details). However, we run into a problem when we try to implement the projection via measurement. In general, a complete set of measurement outcomes on each vertex is given by all irreps of the group $G$. Thus, suppose that the measurement outcome transforms under the operators $\prod_v L^g_v$ as some irrep $\mu$ on vertex $v$ and $\mu'$ on vertex $v'$, then the desired state where the measurement outcome is $\ket{+}$ can be recovered if we first act with the operator $ \bs Z^{\bar \mu}_v \bs Z^{ \mu}_{v'}$. Pushing this through the KW gives the output string operator we must apply to pair-annihilate the irreps and fix the state, which is $\prod_{e \in l} \bs Z^{\bar \mu}_e$
where $l$ is a path whose endpoints are $v$ and $v'$ (see Eq.~\eqref{eq:ZZmapG} for a derivation). If $d^\mu=1$, then the excitation is an Abelian anyon and the string operator factors to each edge and so it can be implemented in a single layer. However, if $d^\mu >1$, then the operator remains a matrix product operator that cannot be implemented in finite depth. 

Consequently, we now address how to overcome this problem for solvable groups by sequentially apply a KW map that gauges a sequence of Abelian subgroups in $G$, as given in Eq.~\eqref{eq:sequentialgauging}.

\subsection{KW for gauging a normal subgroup $N$ of $G$}\label{sec:KWNinG}

We proceed to define $\KW^{N\triangleleft G}$, and show it satisfies the two-step gauging property Eq.~\eqref{eq:2step}. Let us remark that even though $N$ can be assumed to be Abelian for the purposes of this paper, the formulas we present hold for an arbitrary (not necessarily Abelian) normal subgroup of $G$.

\subsubsection{Definition of $\KW^{N \triangleleft G}$}

First, let us define the maps that gives the $N$ and $Q$ components of an element in $G$:
\begin{align}
    \pi:& \ G \rightarrow Q , &\pi(g) &= q\\
    t:&  \ G\rightarrow N, & t(g)&=n
\end{align}
Namely, $g$ can be represented as $g=(n,q) =(t(g),\pi(g))$. Note that while $\pi$ is a group homomorphism, $t$ is not.

We define the $\KW^{N\triangleleft G}$ map by projecting to the $Q$ spins on the vertices via $\pi$ and to the $N$ domain walls via $t$. That is, $\KW^{N\triangleleft G}: \mathbb C[G]^{\otimes N_v} \rightarrow \mathbb C[N]^{\otimes N_e} \otimes \mathbb C[Q]^{\otimes N_v} $ given by

\begin{align}
\bigotimes_v \ket{g_v}_v \xmapsto[]{\KW^{N\triangleleft G}} \bigotimes_e  \ket{t(\bar g_{i_e} g_{f_e})}^N_e  \bigotimes_v \ket{\pi(g_v)}^Q_v 
\label{equ:KWonstatesNG}
\end{align}
Let us show that using this definition, the two-step gauging property~\eqref{eq:2step} holds. Further applying $\KW^Q$,  we find
\begin{align}
& \bigotimes_e  \ket{t(\bar g_{i_e} g_{f_e})}^N_e  \bigotimes_v \ket{\pi(g_v)}^Q_v \\
&\xmapsto[]{\KW^{Q}} 
\bigotimes_e  \left[\ket{t(\bar g_{i_e} g_{f_e})}^N_e \ket{\pi(\bar g_{i_e}g_{f_e})}^Q_e \right] =\bigotimes_e \ket{\bar g_1g_2}^G_e \nonumber
\end{align}
where we have used the fact that $\pi(\bar g_1) \pi(g_2) = \pi(\bar g_1g_2)$, and combined the $N$ and $Q$ degrees of freedom to a single $G$ degree of freedom $\bar g_1g_2$. Comparing to Eq.~\eqref{equ:KWonstatesG}, the two maps combined is exactly $\KW^G$.

\subsubsection{Physical derivation of $\KW^{N\triangleleft G}$}\label{sec:Uphysical}

Given the above formula, a direct computation that we perform in Appendix~\ref{app:KWN} shows that Eq.~\eqref{eq:KWNGdef} holds. Namely $\KW^{N\triangleleft G}$ can be implemented by $\KW^N$ followed by an extra FDLU $\mathcal U^{N\triangleleft G}$.

Here, we opt to motivate physically why this extra unitary is needed, and why it takes the above form. First, let us see what goes wrong in the absence of this unitary. Consider implementing only the map $\KW^N$
\begin{align}
\bigotimes_v \ket{g_v}_v \xmapsto[]{\KW^{N}} \bigotimes_e  \ket{\bar n_{i_e}n_{f_e}}^N_e  \bigotimes_v \ket{\pi(g_v)}^Q_v 
\label{equ:KWonstatesN}
\end{align}
Using this map, there will be a dual $\Rep(N)$ $(d-1)$-form symmetry defined as
\begin{align}
    \Tr \left[\prod_{e \in l}   {\bs Z}^{\nu^{O_e}}_e \right] \label{eq:KWNRepNsymm}
\end{align}
for every $\nu \in \Rep(N)$. However, the remaining $Q$ 0-form symmetry will not take the form $\prod_v L^q_v$ for a non-trivial extension. This can be confirmed explicitly by noting that performing left multiplication by $g$ on all the vertices does not leave the output state invariant because of the modified group multiplication rule \eqref{eq:multiplicationlaw}. Indeed, one finds
\begin{align}
\bigotimes_v \ket{gg_v}_v \xmapsto[]{\KW^{N}} \bigotimes_e  \ket{n'}^N_e  \bigotimes_v \ket{\pi(g_v)}^Q_v 
\label{equ:KWonstatesNGgaction}
\end{align}
where $n' =\overline{n\sigma^q[n_1]\omega(q,q_1)}n\sigma^q[n_2]\omega(q,q_2) \ne \bar n_1n_2$.

In fact, the physical reason why this must be the case when the group extension is non-trivial is because end points of $\Rep(N)$ (which can be thought of as anyons formed by the end points of the symmetry lines) must transform non-trivially under $Q$ \cite{BarkeshliBondersonChengWang2019,Hermele2014,TarantinoLinderFidkowski2016}
\begin{enumerate}
    \item When $\sigma$ is non-trivial, $Q$ must permute the $\Rep(N)$ $(d-1)$-form symmetry\footnote{If $\omega$ is trivial, this can be thought of as a split $d$-group}.
    \item When $\omega$ is non-trivial, $Q$ and $\Rep(N)$ have a mutual anomaly\cite{Tachikawa_2020}. (If $\sigma$ is trivial, this can be detected by symmetry fractionalization: the end points of $\Rep(N)$ will carry a projective representation under $Q$)
\end{enumerate}

For this reason, we need to further perform a basis transformation (in the Heisenberg picture) in order to turn $Q$ into an onsite symmetry $\prod_v L^q_v$ (at the cost of also modifying the form of $\Rep(N)$), so that $Q$ can be sequentially gauged in the next step.

To gain further insight into the required basis transformation, we now turn to investigate the  kernels of the map $\KW^{N \triangleleft G}$. First, we note that left multiplication by $N$ on all vertices leaves invariant $\pi(g_v)$ on all vertices and $t(\bar g_{i_e} g_{f_e})$ on all edges. It is therefore a right kernel of $\KW^{N\triangleleft G}$. More generally, this means that the $G$ symmetry will be reduced to a $Q$ symmetry under the $\KW^{N \triangleleft G}$ map
\begin{align}
    \KW^{N\triangleleft G} \times \prod_v L^g_v  = \prod_v L^q_v \times \KW^{N\triangleleft G} \label{eq:Q0form}
\end{align}
which is imperative for sequential gauging. On the other hand, the dual symmetry is not the usual $\Rep(N)$ symmetry defined in Eq.~\eqref{eq:KWNRepNsymm}. Instead, it is obtained by taking a product of $ \tilde  {\bs Z}^\nu_e$ operators around closed loops where
\begin{align}
    \tilde  {\bs Z}^\nu_e \ket{q_{i_e},n_e,q_{f_e}} &= \bs \rho^\nu(\tilde n_e) \ket{q_{i_e},n_e,q_{f_e}}, \label{eq:tildeZ}\\
      \tilde n_e &=  \aut{q_{i_e}}{n_e}\omega( q_{i_e}, \bar q_{i_e}q_{ f_e})
      \label{eq:ntilde}
\end{align}
for irreps $\nu \in \Rep(N)$. Namely, we have a $(d-1)$-form $\Rep(N)$ symmetry 
\begin{align}
      \Tr \left[\prod_{e \in l}   \tilde {\bs Z}^{\nu^{O_e}}_e \right] \times \KW^{N\triangleleft G} = \KW^{N\triangleleft G}
    \label{eq:repNd-1form} 
\end{align}

To verify this, we note that the group element on each edge can be expressed using the factor system as
\begin{align}
  n_e =   t(\bar g_{i_e}g_{f_e}) = \bar \omega(\bar q_{i_e},q_{i_e}) \aut{\bar q_{i_e}}{\bar n_{i_e} n_{f_e}}  \omega(\bar q_{i_e},q_{f_e})
    \label{eq:tgigfbar}
\end{align}
Inserting this into Eq.~\eqref{eq:ntilde}, and simplifying using the cocycle condition Eq.~\eqref{equ:cocyclecondition} gives
\begin{align}
  \tilde n_e = \bar n_{i_e}n_{f_e}
\end{align}
 Thus, taking a product around a closed loop, the contributions from each vertex pairwise cancels.

The difference between $\tilde{\bs Z}^\nu$ and $\bs Z^\nu$ allows us to back out the required basis transformation. Namely, $\mathcal U^{N \triangleleft G}$ must be the unitary such that
\begin{align}
   (\mathcal U^{N \triangleleft G})^\dagger \ket{q_{i_e}} \ket{n_e} \ket{q_{f_e}} = \ket{q_{i_e}} \ket{\tilde n_e} \ket{q_{f_e}}
\end{align}
so that $\tilde{\bs Z}^\nu = \mathcal U^{N \triangleleft G} \bs Z^\nu(\mathcal U^{N \triangleleft G})^\dagger$. From the definition of $\tilde n$ in Eq.~\eqref{eq:ntilde}, we see that Eq.~\eqref{eq:UasFDLU} is exactly the FDLU that does the job.

Let us now offer a complimentary viewpoint of $\mathcal U^{N \triangleleft G}$ in terms of the Schr\"odinger picture. That is, as a basis transformation on states instead of the symmetry operators. Suppose we input a symmetric product state of $G$, and applied $\KW^N$, but instead of declaring the $Q$ symmetry to be the result of pushing $\prod_v L^g_v$ through the KW map, we \textit{insisted} that the $Q$ 0-form symmetry is given by $\prod_v L^q_v$. Then at this point, the state we have prepared is the $N$ quantum double which is enriched \textit{trivially} by the symmetry $Q$. That is, if we now further gauge this onsite $Q$, the resulting state would be $\mathcal D(N\times Q)$, which would be the same result had the group extension been trivial\footnote{this corresponds to $\KW^Q\KW^N = \KW^{N \times Q}$}. Thus, to fix this we need to entangle the state into a \textit{non-trivial} SET. This entangler is precisely $\mathcal U^{N \triangleleft G}$! The two layers of $\mathcal U^{N \triangleleft G}$ depends on the two data specifying the group extension, and serves the following roles that enrich the state:
\begin{enumerate}
\item $\Sigma_{EV}$ has an action from the vertices to the edges, which makes the gauge charges (end points of $\Rep(N)$) permute correctly under the action of the $Q$ 0-form symmetry.
\item $\Omega_{VEV}$ can be viewed as the entangler that decorates the strings of the gauge charges with a 1+1D ``SPT state"\footnote{``SPT" appears in scare quotes here because the cocycles are valued in $N$ rather than $U(1)$. A fractionalization class does not always correspond to a 1+1D SPT phase.} given by the cocycle $\omega$ which gives the end points the correct symmetry fractionalization by $Q$.
\end{enumerate}

In Appendix \ref{app:KWN}, we prove Eq.~\eqref{eq:KWNGdef} explicitly by showing that the right hand side has the correct kernels as shown in Eqs. \eqref{eq:Q0form} and \eqref{eq:repNd-1form}.

\subsection{Preparation of $\mathcal D(G)$}\label{sec:QDprep}

As an application, we can use $\KW^G$ to prepare the ground state of $\mathcal D(G)$ by applying it on a symmetric product state~\cite{Brell2015,Yoshida17}

\begin{lemma}
\begin{shaded}
  For any finite group $G$, $\ket{ \mathcal D(G)}_E = \KW^G_{VE} \ket{+}^G_V$. 
  \end{shaded}
\begin{proof}
   Recall that $\ket{+}_V$ is the $+1$ eigenstate of the projectors
\begin{align}
    R_v &= \ket{+}^G\bra{+}^G= \frac{1}{|G|}\sum_g R^g_v
\end{align}
For each vertex. In Appendix \ref{app:Gclustestate}, we show that
\begin{align}
   R^g_v \xrightarrow[]{KW^G}  A_v^g =\prod_{e\rightarrow v} R^g_e\prod_{e\leftarrow v} L^g_e.
   \label{eq:LgtoAg}
\end{align}
Therefore, the output state must satisfy
\begin{align}
    A_v = \frac{1}{|G|} \sum_g A_v^g =\prod_{e\rightarrow v} R^g_e\prod_{e\leftarrow v} L^g_v =1,
\end{align}
which is the vertex term of the quantum double model. In addition, for each closed loop around a plaquette $p$, the left kernel of $\KW^G$ tells us that the state also satisfies
\begin{align}
     B_p^\mu =\Tr\left [\prod_{e\in \partial p}  \bs Z_e^{\mu^{O_e}}\right]  =d^\mu
\end{align}
for each irrep $\mu$. Summing over all irreps weighted by their dimensions and using $\frac{1}{|G|}\sum_\mu d^\mu \chi^\mu(g) =\delta_{1,g}$ and $\sum_\mu (d^\mu)^2 = |G|$ we have
\begin{align}
     B_p =\frac{1}{|G|} \sum_\mu d^\mu   B_p^\mu = \sum_{\{g_e\}}\delta_{1,  \prod_{e \subset p}  g_e^{O_e}} \ket{\{g_e\}}\bra{\{g_e\}} =1,
\end{align}
which is precisely the plaquette term of the quantum double model. This shows that the state $\KW^G_{VE} \ket{+}_V$ has eigenvalues $+1$ under both $A_v$ and $B_p$.
\end{proof}
\end{lemma}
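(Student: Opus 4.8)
The plan is to show that the output state $\KW^G_{VE}\ket{+}^G_V$ is a simultaneous $+1$ eigenstate of every vertex operator $A_v$ and every plaquette operator $B_p$ of the quantum double, which (being a commuting-projector frustration-free model, with a unique ground state on the sphere) identifies it with $\ket{\mathcal D(G)}_E$. The entire argument rests on pushing the stabilizers of the input product state through $\KW^G$ via its intertwining relations, supplemented by the left kernel of the map; no hard computation on the many-body state itself is required.

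First I would treat the vertex terms. The symmetric product state is the unique $+1$ eigenstate of each onsite projector $R_v=\frac{1}{|G|}\sum_g R^g_v$, since $R^g_v\ket{+}^G_v=\ket{+}^G_v$ for all $g$. I then invoke the intertwining relation Eq.~\eqref{eq:LgtoAg}, $\KW^G R^g_v = A^g_v\,\KW^G$ with $A^g_v=\prod_{e\rightarrow v}R^g_e\prod_{e\leftarrow v}L^g_e$. Averaging over $g$ gives $\KW^G R_v = A_v\,\KW^G$, so that $A_v\KW^G\ket{+}^G_V=\KW^G R_v\ket{+}^G_V=\KW^G\ket{+}^G_V$, establishing the $+1$ eigenvalue for every $A_v$.

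Next I turn to the plaquette terms, which follow purely from the left kernel and require no input about the state. The $\Rep(G)$ $(d-1)$-form symmetry Eq.~\eqref{eq:repGd-1form}, applied to the contractible loop $\partial p$, gives $B^\mu_p\KW^G=d^\mu\KW^G$ for each irrep $\mu$, hence $B^\mu_p\KW^G\ket{+}^G_V=d^\mu\KW^G\ket{+}^G_V$. To assemble the genuine zero-flux projector I weight by dimensions, $B_p=\frac{1}{|G|}\sum_\mu d^\mu B^\mu_p$; the identity $\sum_\mu(d^\mu)^2=|G|$ then yields eigenvalue $1$, while $\frac{1}{|G|}\sum_\mu d^\mu\chi^\mu(g)=\delta_{1,g}$ confirms that $B_p$ is exactly the flux projector of the quantum double. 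Combining the two computations shows the output has eigenvalue $+1$ under all $A_v$ and $B_p$.

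The technical heart—and the step I expect to be the main obstacle—is establishing the intertwining relation Eq.~\eqref{eq:LgtoAg}. This is where the structure of $\KW^G$ as the tensor-network operator $\bra{+}^G_V U^G_{EV}\ket{1}^G_E$ (the $G$-cluster-state entangler) must be used: one has to track how right multiplication on a vertex is converted, through the generalized controlled gates, into the combination of left and right multiplications on the incident edges that defines $A^g_v$. Once this relation and the left-kernel property of the map are in hand, the remainder is routine bookkeeping with character orthogonality.
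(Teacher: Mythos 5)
Your proposal is correct and follows essentially the same route as the paper's own proof: the vertex terms are handled by pushing the onsite projectors $R_v$ through $\KW^G$ via the intertwining relation Eq.~\eqref{eq:LgtoAg}, and the plaquette terms by applying the left-kernel property Eq.~\eqref{eq:repGd-1form} to contractible loops and summing over irreps with the same character identities. Your one addition---explicitly invoking uniqueness of the ground state of the commuting-projector model to conclude the eigenstate conditions pin down $\ket{\mathcal D(G)}_E$---is a point the paper leaves implicit, and is a reasonable clarification rather than a different argument.
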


Combining this with Theorem~\ref{thm:Gsolvable}, which gives an explicit protocol to implement $G$ with measurement for solvable groups, we have the following result:

\begin{cor}
\begin{shaded}
    For a solvable group $G$ with derived length $l_G$, the ground state of $\mathcal D(G)$ can be prepared with FDLU, $l_G$ rounds of measurement and feedforward
    \end{shaded}
\end{cor}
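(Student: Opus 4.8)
The plan is to obtain the corollary as an immediate composition of the two results that directly precede it: the Lemma stating $\ket{\mathcal D(G)}_E = \KW^G_{VE}\ket{+}^G_V$, and Theorem~\ref{thm:Gsolvable}, which decomposes $\KW^G$ into $l_G$ single-shot gauging layers. First I would observe that the input $\ket{+}^G_V = \bigotimes_v \ket{+}^G_v$ is a product state over the vertices, hence it can be initialized with zero measurement rounds (it is reachable from the reference product state by a single-site depth-one unitary). Consequently the entire measurement cost is carried by the map $\KW^G$ acting on this state, and the Lemma certifies that this map applied to $\ket{+}^G_V$ produces exactly $\ket{\mathcal D(G)}_E$.

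Next I would invoke the sequential decomposition
\begin{align}
    \KW^G = \prod_{j=l_G}^{1} \KW^{\frac{N_j}{N_{j-1}}\triangleleft \frac{G}{N_{j-1}}},
\end{align}
and recall from Eq.~\eqref{eq:KWNGdef} that each factor equals $\mathcal U^{N\triangleleft G}\,\KW^N$, i.e.\ an FDLU composed with an Abelian KW map. The point to emphasize is that at every stage $j$ the group being gauged, $N_j/N_{j-1}$, is \emph{Abelian} (a defining feature of the derived series). Therefore the measurement outcomes produced by the Abelian $\KW^N$ at that stage label Abelian gauge charges and fluxes, and the argument of Sec.~\ref{sec:KWabelian} guarantees that these can always be paired up by one extra finite-depth layer of string operators. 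Each factor is thus genuinely implementable in a single shot: one entangling unitary layer, one layer of single-site measurements, and one finite-depth feedforward correction.

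The counting then yields the claim directly: $l_G$ factors, each contributing exactly one measurement layer, give $l_G$ shots, while the product-state initialization contributes none. The one subtlety I would be careful to stress is the \emph{ordering} of operations — the feedforward of stage $j$ must be applied while its charges are still Abelian, that is, immediately after the stage-$j$ measurement and before the stage-$(j{+}1)$ entangling unitary, since later layers can fuse these charges into the genuinely non-Abelian anyons of $\mathcal D(G)$. The main (and essentially only) obstacle is conceptual rather than computational: checking that this ``measure-then-pair-up'' structure at each stage is compatible with the subsequent gauging, so that no shot needs to be re-run. This is exactly what the construction of $\KW^{N\triangleleft G}$ secures through Eq.~\eqref{eq:Q0form}, which leaves the residual quotient symmetry onsite as $\prod_v L^q_v$; this is what allows the sequential gauging of Theorem~\ref{thm:Gsolvable} to proceed stage by stage, fixing the total cost at precisely $l_G$ rounds. (I would not claim optimality here, since for specific target states the count can sometimes be lowered, as in the nil-2 case.)
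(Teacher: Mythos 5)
Your proposal is correct and follows essentially the same route as the paper, which obtains the corollary by directly combining the Lemma $\ket{\mathcal D(G)}_E = \KW^G_{VE}\ket{+}^G_V$ with Theorem~\ref{thm:Gsolvable}'s decomposition of $\KW^G$ into $l_G$ single-shot factors $\KW^{N\triangleleft G} = \mathcal U^{N\triangleleft G}\,\KW^N$ with Abelian $N$. Your added emphasis on interspersing the feedforward between stages (pairing charges while they are still Abelian) is exactly the point the paper makes in its discussion of the nil-2 and sequential-gauging protocols, so it is a faithful elaboration rather than a deviation.
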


Similarly, one can input a $G$-SPT instead of a symmetry product state. The resulting state will be a twisted quantum double of $G$ \cite{DijkgraafPasquierRoche91TQD,HuWanWu13TQD}. Again, we remind the reader that to prepare $\mathcal D(G)$ the measurement rounds can be lowered, as we have shown for nil-$2$ groups.

\section{No-go for preparation by a  finite number of shots: Fibonacci anyons and non-solvable quantum doubles}\label{sec:nonsolvable}

So far, we have demonstrated that there is an interesting hierarchy of states and maps depending on the number of shots required to prepare or implement it, which is summarized in Table~\ref{table:hierarchy}. It is equally valuable to know negative results, similar to how it is interesting to note that, say, the toric code \emph{cannot} be obtained from the product state by an FDLU \cite{BravyiHastingsVerstraete06}. Note that acting with an FDLU followed by measurement on a state naturally gives rise to a projected entangled pair state (PEPS) representation of the wavefunction. Thus, this automatically excludes volume law states. It is also widely believed that chiral states do not admit a PEPS representation with finite bond dimension\cite{kitaev_anyons_2005,DubailRead15,FK_CommProj2019,SK_CommProj2020} (however this restriction can sometimes be overcome by using gates with exponential tails; see Sec.~\ref{sec:qFDLU}). On the other hand, there are a wide range of states in two dimensions that admit a PEPS representation. In fact, it was recently shown that even certain critical states admit such a representation, and particular ensembles of them can be efficiently prepared in this manner \cite{Nishimoricat22,Lee22}.

In this Section, we nevertheless argue that there are certain phases---in fact, even fixed-point states which admit relatively simple PEPS representations---that \emph{cannot be prepared in finite time} using FDLU, measurement and feedforward. Namely, we argue that a finite number of shots cannot prepare non-solvable quantum doubles and the Fibonacci topological order (Fib)\footnote{To make the latter claim a non-trivial statement, one has to allow quasi-FDLU, since Fib is chiral (see Sec.~ \ref{sec:qFDLU}); indeed, our arguments direclty extend to the case where the unitary gates have exponentially small tails. Nevertheless, a similar argument holds analogously for the double Fibonacci phase, which admits a PEPS representation.}.

\subsection{The necessity of creating nonlocal defects for measurement-prepared topological order}

Let us first recall one of the simplest cases of preparing topological order via measurement: one obtains the toric code upon measuring its stabilizers. The randomness of measurement gives us a speckle of `anyon defects'. While these can be paired up with only a single layer of feedforward gates (to deterministically prepare the clean toric code), this is a conditional gate that depends \emph{nonlocally} on the measurement outcome. Here we formalize the intuition that this is unavoidable by proving the following theorem regarding states that can be prepared with measurement and local corrections (i.e., where one does not require nonlocal classical communication upon applying feedforward):

\begin{theorem}
\label{thm:thm2}
\begin{shaded}
If a state $\ket{\psi_\textrm{out}}$ is deterministically obtained from an input state $\ket{\psi_\textrm{in}}$ by single-site measurements followed by local corrections implemented by an FDLU, then there exists a state $\ket{\varphi}$ such that $\ket{\psi_\textrm{in}}$ is related by an FDLU to $\ket{\psi_\textrm{out}} \otimes \ket{\varphi}$.
\end{shaded}

\begin{proof}
Denote the total Hilbert space $\mathcal H$, the subspace on which the measurements are performed $\mathcal H_M$, and the remaining Hilbert space $\mathcal H_{\overline{M}}$. For each measurement outcome $\ket{s_i}$ where $i\in M$, we follow up by a local unitary $U_i(s_i)$ that depends on $s_i$. All $U_i(s_i)$ must commute regardless of the measurement outcome on each site, since the corrections cannot depend on the order in which they are applied. For a given measurement outcome, we can therefore write the resulting state as
\begin{align}
  \prod_{i \in M}  U_i(s_i)  \ket{s_i}\bra{s_i} \times \ket{\psi_\textrm{in}}.
\end{align}
Next, define the controlled operator
\begin{align}
    CU_i = \sum_{\{s_i'\}} \ket{s_i'}\bra{s_i'}  U_i(s_i')
\end{align}
which applies $ U_i(s_i')$ depending on an orthonormal basis of measurement outcomes $\{s_i'\}$. Orthonormality guarantees that $CU_i$ is unitary. Moreover, since all $U_i(s_i)$ commute, all the control gates must also commute. We next note that
\begin{align}
    U_i(s_i)\ket{s_i}\bra{s_i} = \ket{s_i}\bra{s_i} CU_i 
\end{align}
Note that here $U$ does not act on the ancillas (indeed, the feedforward only needs to correct on $\mathcal H_{\overline{M}}$ to deterministically prepare $\ket{\psi_\textrm{out}}$).
Therefore, the resulting state may also be expressed as
\begin{align}
  \prod_{i \in M} \ket{s_i}\bra{s_i} CU_i  \times \ket{\psi_\textrm{in}}.
\end{align}
That is, this is a deterministic unitary followed by single site measurements. Now, if this always gives the state $\ket{\psi_\textrm{out}}$ on $\mathcal H_{\overline{M}}$, this means that the measurement at the end must not affect the output state. Therefore, before the measurement, we must have
\begin{align}
 \ket{\psi_\textrm{out}}_{\mathcal H_{\overline{M}}} \otimes \ket{\varphi}_{\mathcal H_M}  = \prod_{i \in M} CU_i  \times \ket{\psi_\textrm{in}}_{\mathcal H},
\end{align}
for some state $\ket{\varphi}$. Since $\prod_{i \in M} CU_i$ is an FDLU, this completes the proof.
\end{proof}
\end{theorem}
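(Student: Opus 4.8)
The plan is to invoke the \emph{principle of deferred measurement}. The physical content of the hypothesis is that the feedforward is \emph{local}: each single-site correction depends only on the measurement outcome at (or within a bounded neighborhood of) its own site, requiring no global classical communication. Such local, outcome-controlled corrections can be replaced by coherent quantum-controlled gates that are pushed to \emph{before} the measurement, turning the adaptive protocol into a single deterministic FDLU whose output is merely measured at the end.

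First I would decompose $\mathcal H = \mathcal H_M \otimes \mathcal H_{\overline M}$, with $\mathcal H_M$ the measured (ancilla) sites and $\mathcal H_{\overline M}$ carrying $\ket{\psi_\textrm{out}}$. For a record $\{s_i\}_{i\in M}$ one applies local unitaries $U_i(s_i)$ supported on $\mathcal H_{\overline M}$; these must mutually commute, since the single-site measurements commute and a deterministic outcome cannot depend on the order in which the corrections are applied. The corresponding branch of the protocol is $\prod_{i} U_i(s_i)\,\outer{s_i}{s_i}\,\ket{\psi_\textrm{in}}$.

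Next I would introduce the coherently-controlled gates $CU_i = \sum_{s'} \outer{s'}{s'}\, U_i(s')$, which are unitary by orthonormality of the measurement basis and which inherit mutual commutativity from the $U_i$. The key identity $U_i(s_i)\,\outer{s_i}{s_i} = \outer{s_i}{s_i}\,CU_i$ lets me commute every correction to the right of every measurement projector, so that the whole protocol becomes the deterministic circuit $\prod_i CU_i$ followed by a final projective measurement of $\mathcal H_M$. Because each $CU_i$ is geometrically local and they mutually commute, $\prod_i CU_i$ can be organized into a constant number of layers and is thus a genuine FDLU.

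The hard part---and the conceptual crux---is to upgrade \emph{determinism of the output} into a \emph{tensor-product structure} of the pre-measurement state. Here I would expand $\prod_i CU_i \ket{\psi_\textrm{in}} = \sum_{\{s\}} c_{\{s\}}\,\ket{\{s\}}_M \otimes \ket{\chi_{\{s\}}}_{\overline M}$ in the measurement basis. Determinism forces every branch with $c_{\{s\}} \neq 0$ to collapse $\mathcal H_{\overline M}$ onto the same $\ket{\psi_\textrm{out}}$, so that $\ket{\chi_{\{s\}}} \propto \ket{\psi_\textrm{out}}$ throughout and the state factorizes as $\ket{\psi_\textrm{out}}_{\overline M} \otimes \ket{\varphi}_M$ with $\ket{\varphi} = \sum_{\{s\}} c_{\{s\}} \ket{\{s\}}$. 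Since $\prod_i CU_i$ is an FDLU relating $\ket{\psi_\textrm{in}}$ to $\ket{\psi_\textrm{out}} \otimes \ket{\varphi}$, this is exactly the claimed equivalence, and it completes the argument.
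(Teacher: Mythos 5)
Your proposal is correct and follows essentially the same route as the paper's own proof: the same decomposition $\mathcal H = \mathcal H_M \otimes \mathcal H_{\overline M}$, the same commuting controlled gates $CU_i$, the same key identity $U_i(s_i)\ket{s_i}\bra{s_i} = \ket{s_i}\bra{s_i}CU_i$ deferring the measurements, and the same final step extracting the tensor-product structure from determinism. The only (welcome) difference is that you spell out the last step explicitly—expanding in the measurement basis and noting that every nonzero branch must collapse onto the same $\ket{\psi_\textrm{out}}$—where the paper states it more tersely.
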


\begin{cor}
\begin{shaded}
Starting with a product state, a single-shot protocol with locally correctable outcomes can only prepare invertible states. 
\label{thm:cor3}
\end{shaded}
\begin{proof}
    Recall that invertible states $\ket{\psi_\text{inv}}$ are states such that there exists an ``inverse state" $\ket{\psi_\text{inv}^{-1}}$ such that $U \ket{\psi_\text{inv}} \otimes \ket{\psi_\text{inv}^{-1}} $ is a product state for some FDLU $U$. By definition, if $\ket{\psi_\text{in}}$ is a product state, then $\ket{\psi_\text{out}}$ is an invertible state.
\end{proof}
\end{cor}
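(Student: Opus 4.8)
The plan is to reduce the statement directly to Theorem~\ref{thm:thm2}, which I will treat as given. First I would unpack the two definitions in play. A ``single-shot protocol with locally correctable outcomes'' is by construction a single layer of single-site measurements followed by local corrections implemented by an FDLU---precisely the hypothesis of Theorem~\ref{thm:thm2}. On the output side, I recall that a state $\ket{\psi}$ is \emph{invertible} if there is some companion state $\ket{\psi^{-1}}$ and an FDLU $U$ with $U\left(\ket{\psi}\otimes\ket{\psi^{-1}}\right)$ equal to a product state.

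Next I would simply invoke Theorem~\ref{thm:thm2} with the input $\ket{\psi_\textrm{in}}$ taken to be the product state of the hypothesis. The theorem supplies a state $\ket{\varphi}$ and an FDLU relating $\ket{\psi_\textrm{in}}$ to $\ket{\psi_\textrm{out}}\otimes\ket{\varphi}$. Since an FDLU is invertible with FDLU inverse, this relation can be read in the direction $\ket{\psi_\textrm{out}}\otimes\ket{\varphi}\mapsto\ket{\psi_\textrm{in}}$; that is, there is an FDLU $V$ with $V\left(\ket{\psi_\textrm{out}}\otimes\ket{\varphi}\right)=\ket{\psi_\textrm{in}}$.

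The final step is to match this against the definition of invertibility. Because $\ket{\psi_\textrm{in}}$ is a product state by assumption, the companion state $\ket{\varphi}$ produced by the theorem plays exactly the role of the inverse state $\ket{\psi_\textrm{out}^{-1}}$, with $V$ the witnessing FDLU. Hence $\ket{\psi_\textrm{out}}$ is invertible, as claimed.

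I do not expect a genuine obstacle, since Theorem~\ref{thm:thm2} already performs the heavy lifting of collapsing the measure-then-correct protocol into a single deterministic FDLU on the enlarged Hilbert space. The only points requiring care are bookkeeping ones: (i) confirming that ``locally correctable'' is literally the FDLU-correction hypothesis of the theorem, so that no commutativity or locality assumption is silently strengthened, and (ii) recognizing that ``related by an FDLU'' is a symmetric relation, so that $\ket{\varphi}$ can legitimately be reinterpreted as an inverse state rather than merely a residual factor. If one wished to avoid invoking invertibility as a black box at all, the conclusion is in fact an immediate rephrasing of the theorem's statement once $\ket{\psi_\textrm{in}}$ is specialized to a product state.
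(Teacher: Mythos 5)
Your proposal is correct and follows essentially the same route as the paper: specialize Theorem~\ref{thm:thm2} to a product-state input and identify the residual factor $\ket{\varphi}$ as the inverse state witnessing invertibility of $\ket{\psi_\textrm{out}}$. Your extra remark that ``related by an FDLU'' is symmetric (since an FDLU has an FDLU inverse) merely makes explicit a step the paper leaves implicit in its phrase ``by definition.''
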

Likewise, for a finite-shot protocol, one can still prepare only invertible states from the product state if at each level the measurement results are locally correctable. Moreover, the same reasoning suggests that in the space of maps, locally-correctable protocols can implement only QCAs.

\subsection{No-go argument for the Fibonacci topological order: `Fib' is a non-trivial measurement-equivalent phase}

Let us use the above result to present a plausibility argument that Fib cannot be prepared in a finite number of shots. We will assume there is a sequence of FDLU, measurements and feedforward corrections that prepares Fib, and argue a contradiction. Let us focus on the final shot of a potential multi-shot protocol.

First, let us assume that in the final shot there are measurements that are non-locally correctable. Then, the obtained wavefunction can be realized as a ground state of a alternate Hamiltonian that differs from the true Hamiltonian by terms localized near such measurement outcomes. Such a wavefunction can be said to contain ``defects". Such defects can be classified by non-trivial superselection sectors, since by definition, they cannot be removed by local corrections. Such superselection sectors are referred to as \textit{anyons} if they are zero-dimensional, and \textit{line defects} if they are one-dimensional in space. However, Fib has only one type of non-Abelian anyon $a$ with fusion rules $a\times a = 1+ a$. It also does not have any line defects, since the automorphism group is trivial, and it does not permit a non-trivial symmetry fractionalization class. Therefore, if we wish to deterministically prepare the state, we are not allowed to postselect on obtaining the scenario where all measurement outcomes are locally correctable (corresponding to the trivial anyon $1$). Hence, it is possible that some measurement outcome results in a non-locally correctable error, corresponding to the anyon $a$. Since, $a$ is non-Abelian, it cannot be paired up in finite depth \cite{Shi19,Liu21,Bravyi22}. Thus, feedforward corrections will fail to prepare Fib.

The only remaining possible scenario is then that all measurements are locally correctable. By Theorem \ref{thm:thm2}, this is only possible if the state before the measurement is equivalent (up to an FDLU that can potentially depend on the measurement outcome of the previous round) to $\ket{\textrm{Fib}} \otimes \ket{\mathcal C}$ for some state $\ket{\mathcal C}$.
Let us now ask whether $\ket{\textrm{Fib}} \otimes \ket{\mathcal C}$ can itself be prepared by a finite number of shots. Again, measurement outcomes can correspond to anyons or line defects in the phase $\textrm{Fib} \boxtimes \mathcal C$. The only possible Abelian anyons are those entirely in $\mathcal C$, and since the states are in a tensor product, measuring these Abelian anyons cannot help prepare $\ket{\textrm{Fib}}$. Thus, we are left with the possibility that the measurements result in line defects in $\ket{\textrm{Fib}} \otimes \ket{\mathcal C}$. By a similar argument, if the line defects are non-invertible, they cannot be shrunken away with an FDLU. Therefore, we only consider the case that the measurement outcomes correspond to invertible line defects.

We now give a physical argument that measuring invertible line defects does not help prepare topologically ordered states. Since these defects correspond to charges of 1-form symmetries \cite{generalizedglobalsymmetries}, if they occur as measurement outcomes, they would come from a KW map that gauges a 1-form symmetry, which physically corresponds to anyon condensation. Thus, measuring such line defects only serves to reduce the topological order, rather than creating a more complicated one. Thus the parent state itself should be even harder to prepare. Indeed, from the category theory point of view, it is known that the Fibonacci topological order cannot be trivialized by a finite number of Abelian gauging procedures \cite{Etingof2011weakly}.

To give an explicit example, consider $\mathcal C$ to be another copy of Fib. Then there is an Abelian line defect given by a $\ZZ_2$ symmetry that swaps the two copies. However, the parent phase on which measuring would give such a line defect corresponds to the phase obtained by gauging the SWAP symmetry. Thus, this does not help us prepare Fib.

To conclude, we conjecture the following:
\begin{conjecture}
\label{thm:fib}
\begin{shaded}
The Fibonacci topological order cannot be prepared deterministically from the product state by a finite number of shots.
\end{shaded}
\end{conjecture}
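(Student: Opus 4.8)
The plan is to argue by induction on the number of shots $\ell$, taking Theorem~\ref{thm:thm2} as the central engine. Assume for contradiction that some finite-shot protocol deterministically prepares $\ket{\text{Fib}}$ from a product state, and examine the \emph{final} round of measurement. I would split into two cases according to whether the measurements in that round are locally correctable, and show that neither case can succeed: non-locally-correctable outcomes produce obstructions that FDLU feedforward cannot heal, while locally-correctable outcomes, by Theorem~\ref{thm:thm2}, merely push the difficulty back to preparing a slightly enlarged state, enabling an inductive descent in $\ell$.

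For the non-locally-correctable branch, the key is to formalize the dictionary between a measurement outcome that defeats local correction and a nontrivial superselection sector of the target phase: an \emph{anyon} if the defect is zero-dimensional, a \emph{line defect} if one-dimensional, since by definition such an excitation cannot be removed by any local operator. For Fib this classification is maximally constrained: the only nontrivial anyon is the non-Abelian $a$ with $a\times a = 1 + a$, and there are no nontrivial invertible line defects, because the automorphism group is trivial and no nontrivial symmetry-fractionalization data exists. Since $a$ is non-Abelian, the known obstruction to pair-creating or pair-annihilating two copies of $a$ by a finite-depth circuit \cite{Shi19} forbids healing the corresponding outcome; hence determinism forces every final-round outcome to be locally correctable.

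For the locally-correctable branch, Theorem~\ref{thm:thm2} implies the pre-measurement state is FDLU-equivalent (possibly depending on earlier outcomes) to $\ket{\text{Fib}}\otimes\ket{\mathcal C}$ for some $\ket{\mathcal C}$, and I would then ask whether this enlarged state admits a shorter protocol. Its superselection sectors live in $\text{Fib}\boxtimes\mathcal C$: the only Abelian anyons come from the $\mathcal C$ factor, and by the tensor-product structure measuring them cannot build the Fib factor. The remaining option is measuring line defects; non-invertible ones cannot be shrunk by FDLU by the same non-Abelian obstruction, while invertible ones are charges of a $1$-form symmetry, so measuring them realizes an Abelian gauging/anyon-condensation step, which can only \emph{reduce} the anyon content. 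This reduces the whole problem to the statement that Fib cannot be reached by any finite sequence of Abelian condensation/gauging steps, i.e. that Fib is not weakly group-theoretical \cite{Etingof2011weakly}, which closes the induction.

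The hard part is not the category-theoretic input but the rigorous version of the two physical claims above: the dictionary identifying a non-correctable outcome with a nontrivial superselection sector, and the monotonicity assertion that measuring invertible line defects can never \emph{increase} topological order. Both rest on a general theory of how single-site measurements act on the PEPS-representable states produced after each FDLU layer, in particular on showing that the only way a measurement layer can inject long-range entanglement is through (Abelian) anyon creation or (invertible) $1$-form gauging, both of which are controlled. Establishing this for arbitrary (non-fixed-point) input states is precisely what keeps the result a conjecture rather than a theorem; I would attempt it by combining Theorem~\ref{thm:thm2} with a structural analysis of the PEPS tensors generated after each FDLU-plus-measurement layer.
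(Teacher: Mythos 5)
Your proposal follows essentially the same route as the paper's own plausibility argument: analyze the final shot, use Theorem~\ref{thm:thm2} to handle the locally-correctable case by reducing to $\ket{\textrm{Fib}}\otimes\ket{\mathcal C}$, rule out non-locally-correctable outcomes via the non-Abelian anyon $a$ and the absence of line defects, dismiss invertible line defects as Abelian gauging/condensation that only reduces topological order, and close with the result of Ref.~\onlinecite{Etingof2011weakly}. You also correctly identify the same gaps (the superselection-sector dictionary and the condensation-monotonicity claim) that the paper itself leaves unproven, which is exactly why the statement remains a conjecture there rather than a theorem.
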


Assuming that the above conjecture is true, this also implies that any number of copies of Fibonacci also cannot be prepared. If we were able to prepare $N$ copies, we can simply ``measure away" $N-1$ copies and we would be left with a single copy of Fib. Similarly, the double Fibonacci topological order, which admits a PEPS wavefunction\cite{levin_string-net_2005,GuLevinSwingleWen09,Buerschaper09}, cannot be prepared in any number of shots.

This result motivates us to define the notion of a \textit{measurement-equivalent phase}.
\begin{definition}
\begin{shaded}
Two states $\ket{\psi}$ and $\ket{\psi'}$ are in the same \emph{measurement-equivalent phase} if one can deterministically prepare both $\ket{\psi}$ from $\ket{\psi'}$ and $\ket{\psi'}$ from $\ket{\psi}$ using a finite number of rounds of FDLU, measurements and (finite-depth) feedforward.
\end{shaded}
\end{definition}

Our claim implies that Fib realizes a non-trivial measurement-equivalent phase, 
distinct from twisted quantum doubles for $D(G)$ for solvable groups, which lie in the trivial measurement-equivalent phase.

\subsection{Conjecture for non-solvable quantum doubles}

Similarly, we can construct a similar argument for non-solvable quantum doubles. In this case, it first helps to show that certain groups are in the same measurement-equivalent phase.

\begin{lemma}
\begin{shaded}
The quantum double $\mathcal D(G)$ for any finite group $G$ is in the same measurement-equivalent phase as $\mathcal D(G_{pc})$, where $G_{pc}$ is a perfect centerless group corresponding to the central quotient of the perfect core of $G$.
\end{shaded}
\end{lemma}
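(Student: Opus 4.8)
The plan is to exploit the structure theory of $G$ together with the principle, emphasized throughout the paper, that gauging and condensing \emph{abelian} (hence solvable) symmetries are both finite-shot operations, whereas the non-solvable, centerless ``core'' is precisely what survives. First I would fix the relevant groups. Let $P = G^{(\infty)}$ be the stable term of the derived series \eqref{eq:derivedseries}, i.e.\ the perfect core; since the derived series of $G/P$ terminates, $G/P$ is solvable. Writing $Z = Z(P)$, the quotient $G_{pc} = P/Z$ is perfect (quotients of perfect groups are perfect) and centerless: if $\bar x \in Z(P/Z)$ then $y \mapsto [x,y]$ is a homomorphism $P \to Z$ which, as $Z$ is abelian, factors through $P^{\mathrm{ab}} = 1$, forcing $x \in Z$. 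Thus $G_{pc}$ is indeed perfect centerless, and it is trivial exactly when $G$ is solvable, consistent with the solvable case mapping to the trivial class.

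The measurement-equivalence is an equivalence relation precisely because measurement can both \emph{build up} structure (gauging an abelian or solvable symmetry, finite-shot by Theorem~\ref{thm:Gsolvable}) and \emph{tear it down} (condensing the resulting abelian anyons, again finite-shot since abelian anyons can be paired up in finite depth, unlike non-abelian ones \cite{Shi19}). The strategy is therefore to connect $\mathcal D(G)$ to $\mathcal D(G_{pc})$ by a chain of such moves that peel off exactly the solvable quotient $G/P$ and the central data $Z$, in two reductions.

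For the first reduction, $\mathcal D(G) \sim \mathcal D(P)$: since $G$ sits in an extension $1 \to P \to G \to G/P \to 1$ with $G/P$ solvable, the two-step gauging identity \eqref{eq:2step} recognizes $\mathcal D(G)$ as the result of gauging the solvable $G/P$-symmetry carried by $\mathcal D(P)$, preparing $\mathcal D(G)$ from $\mathcal D(P)$ in $l_{G/P}$ shots. Conversely, $\mathcal D(P)$ is recovered from $\mathcal D(G)$ by condensing the charges of $\Rep(G/P) \subset \Rep(G)$; because $G/P$ is solvable these are assembled from abelian anyons at each level of the derived series, so the condensation is implemented by a finite sequence of single-site measurements with feedforward. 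For the second reduction, $\mathcal D(P) \sim \mathcal D(P/Z)$: the extension $1 \to Z \to P \to P/Z \to 1$ is central with $Z$ abelian, so $\mathcal D(P)$ and $\mathcal D(P/Z)$ differ only by gauging (dually, condensing) the abelian data associated with $Z$, which is finite-shot in either direction. Chaining gives $\mathcal D(G) \sim \mathcal D(P) \sim \mathcal D(P/Z) = \mathcal D(G_{pc})$.

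The main obstacle I anticipate is making the \emph{ungauging} direction fully rigorous. The paper's constructive machinery is phrased for gauging (the KW map), whereas here one must argue that measuring and pairing the relevant abelian anyons implements the reverse condensation \emph{exactly}, landing on $\mathcal D(P)$ (resp.\ $\mathcal D(P/Z)$) rather than a stacked or twisted variant. Concretely, I would need to check that the extension data $\omega$ and $\sigma$ do not obstruct identifying the condensed theory with the smaller double; this is the same kind of twist bookkeeping that appears in Lemma~\ref{lem:metabelian}, now run in reverse. The saving feature is that every anyon being measured in either reduction is abelian, so the finite-shot bound holds regardless of how these twists are resolved, leaving only the identification of the resulting theory as genuine work.
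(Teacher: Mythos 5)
Your proposal is correct and follows essentially the same route as the paper's proof: peel off the solvable quotient $G/P$ of the perfect core via sequential abelian KW gauging (and, in reverse, condensation of the resulting abelian anyons), then relate $\mathcal D(P)$ to $\mathcal D(P/Z(P))$ through the central extension with its fractionalization data, which is exactly the paper's SET-enrichment step. Your explicit check that $P/Z(P)$ is perfect and centerless is a self-contained substitute for the paper's appeal to Gr\"un's lemma, but otherwise the decomposition, the finite-shot moves, and even the acknowledged looseness in the ungauging direction match the paper's argument.
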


Here, we recall a few definitions. A perfect group is defined as a group that is equal to its own commutator subgroup $(G=[G,G])$, and the perfect core $G_p$ of a group $G$ is the largest perfect subgroup of $G$. A group $G$ is centerless if its center $Z(G)$ is trivial, and the central quotient of a group $G$ is defined as $G/Z(G)$. An example of a perfect centerless group is $A_n$, the alternating group on the set of $n$ elements for $n\ge 5$.

The above lemma tells us that we can reduce the problem to showing that $\mathcal D(G_{pc})$ realizes a non-trivial measurement-equivalent phase for each non-trivial $G_{pc}$. We note in particular that for solvable groups, their quantum doubles are in the same measurement-equivalent phase as that of the trivial group $G_{pc} = \{ 1 \} $. 

The following proof can be interpreted as the fact that one can sequentially condense Abelian anyons starting from $\mathcal D(G)$ to reach $\mathcal D(G_{pc})$. Notably, $\mathcal D(G_{pc})$ can be thought as a ``fixed point" in the measurement-equivalent phase because it only contains non-Abelian anyons.

\begin{proof}
To show our claim, we turn to the derived series \eqref{eq:derivedseries} of an arbitrary finite group $G$. The derived series will always stabilize to the perfect core $G_p$ of $G$. Therefore, given such a group, one can always start from $\mathcal D(G_p)$ and apply the KW map to sequentially gauge appropriate groups according to the derived series in the same spirit as Eq. \eqref{eq:sequentialgauging} in order to arrive at $\mathcal D(G)$.

Specifically, we have
\begin{align}
\ket{\mathcal D(G)}_E  &= \KW^{G}_{EV} \ket{+}^{G}_V \\
&=  \KW^{G/G_p}_{EV} \mathcal U^{G_p \triangleleft G}_{EV} \KW^{G_p}_{EV}\ket{+}^{G_p}_V \ket{+}^{G/G_p}_V  \\
&= \KW^{G/G_p}_{EV} \mathcal U^{G_p \triangleleft G}_{EV} \ket{\mathcal D(G_p)}_E\ket{+}^{G/G_p}_V
\end{align}
since $G/G_p$ is solvable, we can prepare $\KW^{G/G_p}_{EV}$ using measurements and feedforward.

Next, we further argue that $\mathcal D(G_p)$ for any perfect group can in turn be prepared from $\mathcal D(G_{pc})$. If $G_p$ is already centerless then we are done. Otherwise, Gr\"un's lemma states that the quotient group $G_p/Z(G_p)$ is centerless, and is therefore a perfect centerless group. Thus, starting from $\mathcal D(G_{pc})$ we can turn it into an SET state where the fluxes are fractionalized by the symmetry $Z(G_p)$ according to the 2-cocycle that determines the central extension
\begin{align}
 1 \rightarrow Z(G_p)  \rightarrow G_p \rightarrow G_{pc} \rightarrow 1
\end{align}
Then, gauging $Z(G_p)$ using the KW map prepares $\mathcal D(G_{p})$ as desired.

Reciprocally, to prepare $\mathcal D(G_{pc})$ from $\mathcal D(G)$ it suffices to perform measurements of the hopping operators to condense the anyons that resulted from the gauging in the reverse order.
\end{proof}

Similarly, to the Fibonacci case, let us consider the defects in $\mathcal D(G_{pc})$. The anyons in $\mathcal D(G_{pc})$ are all non-Abelian: since the group is perfect, it does not have (non-trivial) one dimensional irreps (corresponding to gauge charges), and since the group is centerless, all conjugacy classes except the trivial one have more than one group element (corresponding to non-Abelian fluxes and dyons). Therefore, the only way to prepare this phase comes from measuring Abelian line defects. Again, as we have argued, this does not help since the parent state must be a larger topological order. We thus conjecture:

\begin{conjecture}
\begin{shaded}
The quantum doubles $\mathcal D(G_{pc})$ and $\mathcal D(G_{pc}')$ for distinct perfect centerless groups $G_{pc}$ and $G_{pc}'$ are in distinct measurement-equivalent phases.
\end{shaded}
\end{conjecture}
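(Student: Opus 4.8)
The plan is to exhibit an invariant of the measurement-equivalence class that recovers $G_{pc}$, so that distinct perfect centerless groups necessarily give distinct classes. The starting point is to sharpen the heuristic already used in this section into a precise statement about how a single deterministic shot can act at the level of the anyon theory (modular tensor category, MTC). Using Theorem~\ref{thm:thm2} together with the fact that non-Abelian anyons cannot be pair-created or pair-annihilated by an FDLU \cite{Shi19,Liu21,Bravyi22}, I would argue that any single shot relating two anyon theories deterministically must act either (i) as an FDLU, leaving the full MTC unchanged, or (ii) as the condensation of a collection of Abelian bosons (equivalently, a connected \'etale algebra supported on Abelian anyons) or its inverse gauging operation. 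Indeed, the only measurement outcomes that are locally correctable or finite-depth-pairable are the trivial anyon, Abelian anyons, and invertible line defects, and---as argued above---measuring these corresponds precisely to Abelian anyon condensation or gauging of an Abelian ($0$- or $1$-form) symmetry. Thus measurement-equivalence between quantum doubles is generated by the relation ``differ by Abelian condensation,'' which I will call \emph{Abelian-condensation equivalence}.

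Next I would use the defining property of $G_{pc}$. As shown in the text, $\mathcal D(G_{pc})$ contains \emph{no} non-trivial Abelian anyons: perfectness kills one-dimensional irreps (Abelian charges) and centerlessness kills singleton conjugacy classes (Abelian fluxes), while dyons built on these are likewise non-Abelian. Consequently $\mathcal D(G_{pc})$ admits no non-trivial Abelian condensation---it is \emph{terminal} for move (ii). The only remaining nontrivial moves out of $\mathcal D(G_{pc})$ are gaugings of its global or topological symmetries; but the inverse of such a gauging is again an Abelian condensation, so any theory reached from $\mathcal D(G_{pc})$ condenses back down to $\mathcal D(G_{pc})$. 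This motivates defining the invariant as the \emph{Abelian-condensation core}: the terminal theory with no non-trivial Abelian anyons reached by a maximal sequence of Abelian condensations. I would then show this core is well defined (independent of the condensation path) and equals $\mathcal D(G_{pc})$ for $\mathcal D(G)$, dovetailing with the Lemma above.

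With the core in hand, the conjecture reduces to two statements. First, the core is a \emph{complete} invariant of Abelian-condensation equivalence among quantum doubles, so that $\mathcal D(G_{pc})$ and $\mathcal D(G_{pc}')$ being equivalent forces them to be isomorphic as MTCs. Second, the MTC $\mathcal D(G_{pc})$ determines $G_{pc}$ up to isomorphism, so distinct perfect centerless groups give inequivalent cores. For the second point I would reconstruct $G_{pc}$ from the fusion and braiding data---conjugacy classes and irreps are encoded in the anyon labels and their $S$- and $T$-matrices---while flagging the known phenomenon of isocategorical groups (non-isomorphic groups with braided-equivalent doubles); I would argue these do not arise here, since the smallest examples are nilpotent $2$-groups, whose perfect centerless core is trivial.

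The hard part will be the first reduction: proving that Abelian-condensation equivalence cannot connect $\mathcal D(G_{pc})$ to $\mathcal D(G_{pc}')$ unless the cores agree. Unlike full Witt equivalence---under which every Drinfeld double is trivial, since it bounds a gapped boundary---our relation is restricted to \emph{Abelian} condensations, and one must show this restriction is rigid enough to remember the non-Abelian core. I expect this requires classifying the connected \'etale algebras in $\mathcal D(G_{pc})$ and an analysis in the spirit of \cite{Etingof2011weakly} showing that no finite sequence of Abelian gaugings and condensations interpolates between doubles of distinct perfect centerless groups. The well-definedness of the core (a confluence, Church--Rosser-type property for condensation moves) is the technical crux, together with ruling out intermediate gaugings of genuinely non-invertible $\Rep(G)$-symmetries as admissible deterministic single shots.
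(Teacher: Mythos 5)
First, a point of calibration: the statement you are trying to prove is stated in the paper as a \emph{conjecture}, not a theorem. The paper offers only a plausibility argument --- namely, that every non-trivial anyon in $\mathcal D(G_{pc})$ is non-Abelian (perfectness kills one-dimensional irreps, centerlessness kills singleton conjugacy classes), so the only potentially useful measurements would produce Abelian line defects, and measuring those corresponds to gauging a $1$-form symmetry, i.e.\ anyon condensation, which only reduces the topological order. Your proposal follows exactly this route in spirit, and your ``Abelian-condensation core'' is a reasonable way of packaging the paper's heuristic (terminal theories with no Abelian bosons, dovetailing with the paper's Lemma that reduces any $\mathcal D(G)$ to $\mathcal D(G_{pc})$) into a candidate invariant. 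In that sense you have not deviated from the paper; you have tried to formalize it.

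However, your proposal is a program, not a proof, and the gaps in it are precisely the ones the paper itself identifies as open. The foundational step --- that any deterministic shot relating two topological orders must act either as an FDLU or as condensation/gauging of Abelian anyons or invertible defects --- is exactly what the paper flags in its Outlook as the missing ingredient (``to show rigorously that performing measurements that are non-locally correctable relates the initial and final topological orders by gauging an Abelian symmetry''). Theorem~\ref{thm:thm2} alone does not give you this: it only handles the locally-correctable case, and the passage from ``non-locally-correctable measurement outcomes'' to ``superselection sectors that must be finite-depth pairable'' is argued physically, not derived. Beyond that, your two structural claims --- confluence (path-independence) of the condensation core, and completeness of the core as an invariant of Abelian-condensation equivalence --- are asserted rather than established, and you acknowledge as much. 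One genuinely useful addition on your part is flagging the isocategorical-group subtlety: if two non-isomorphic perfect centerless groups had braided-equivalent doubles, the conjecture as literally stated would fail; the paper does not address this, and your observation that the known isocategorical examples are nilpotent (hence have trivial perfect centerless core) is a worthwhile sanity check, though not itself a proof that no perfect centerless examples exist. In short: your approach matches the paper's intended strategy, but neither you nor the paper has a proof, and you should present this as a refined conjecture/program rather than as a demonstration.
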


\section{Quasi-local unitaries and measurements}\label{sec:qFDLU}
So far, our discussion has focused on states that can be prepared using (strictly) local unitaries, by which we mean finite-depth circuits consisting of finite-range gates. In this final section, we discuss what we can obtain if we instead allow for quasi-local unitaries, i.e., unitary gates with exponentially small long-range tails.

First, using ancillas, it is now possible to prepare any invertible (possibly chiral) state since the ``doubled" phase (the phase along with its time-reversed partner) can always be prepared from the product state via quasi-LUs\cite{Barbarino20}. The partner can then be discarded leaving us with a single chiral state (see Appendix \ref{app:Chern} for an explicit example for the Chern insulator). More generally, on the space of maps rather than states, any quasi-local QCA can also be implemented using quasi-FDLU and ancillas. Indeed, we note that although the proof given in \cite{HaahFidkowskiHastings18} (that QCA $\otimes$ QCA$^{-1}$ is an FDLU) presumes the strictly local case, the proof carries over to the quasi-local case. We also note that in one dimension, quasi-local QCAs have the same classification as that of their local counterparts \cite{Ranard20}, while higher-dimensional classifications are unknown.

Adding measurements, one can then perform either KW or JW to gauge such invertible state in one shot \cite{measureSPT}. For example, the Ising topological order can be prepared with quasi-LU and one round of measurement by gauging a $p+ip$ superconductor. Similarly, all of Kitaev's 16-fold way can be similarly prepared. We note that Ref.~\cite{LuHsieh} observed that one can also obtain the 16-fold way states by locally measuring the parity operator, as a sort of measurement-induced Gutzwiller projection; this also qualifies as a single-shot protocol, although this viewpoint was not emphasized.

It is also worth noting that while Abelian anyon theories with gappable edges can be prepared with FDLU and single-site measurements (since they can be prepared by gauging an appropriate Abelian SPT phase \cite{KS11,Kaidi2021higher}), all cases with \emph{ungappable} boundaries can be prepared using quasi-FDLU in one shot. Given such an Abelian anyon theory $\mathcal M$,  its ``double" $\mathcal M \boxtimes \overline{\mathcal M}$ admits a gapped boundary, and therefore can be prepared in one shot. Then, $\mathcal M$ and $\overline{\mathcal M}$ can be separated to different Hilbert spaces using a quasi-local unitary so that we can now discard $\overline{\mathcal M}$. For example, the $\nu=1/3$ Laughlin state\cite{Laughlin83} can be prepared by first preparing a $\ZZ_3$ Toric Code, adding fermionic degrees of freedom, then performing a quasi-FDLU to the fixed point of the doubled Laughlin state.

Note the importance of quasi-FDLU in preparing states with ungappable edges. In the absence of exponentially small tails, our preparation scheme would have given a PEPS realization of such a state with finite bond dimension, which is believed to be impossible~\cite{kitaev_anyons_2005,DubailRead15,FK_CommProj2019,SK_CommProj2020}.

\section{Outlook}\label{sec:outlook}
In this work, we have introduced a hierarchy of long-range entangled states based on the number of shots required to prepare the state.  In particular, we provided an explicit protocol that shows that nil-2 quantum double states (despite being non-Abelian) are as simple to prepare as Abelian topological states when measurement is an additional resource. In general, we have presented a hierarchy of KW maps for solvable groups $G$ based on their derived length. Moreover, for groups with an infinite derived length (i.e., non-solvable groups) we conjecture that their quantum doubles are in distinct measurement-equivalent phases of matter, and similarly for the Fibonacci topological order.

It is interesting to make a comparison to the ability for such states to be universal for quantum computation. Indeed, it is known that non-nilpotent solvable quantum doubles can only realize Clifford gates by braiding, whereas Fibonacci anyons and non-solvable quantum doubles can realize non-Clifford gates by braiding alone. Nevertheless, for non-nilpotent quantum doubles,  additional ancillas and measurement can enable a universal gate set amenable for topological quantum computation\cite{OgburnPreskill99,Kitaev_2003,Mochon04}. It is thus worth exploring whether there is a deeper connection between the hierarchy of states from measurements and the computational power of the prepared state. Moreover, we have pointed out that that solvable but non-nilpotent quantum doubles require at least two rounds of measurement. It would be interesting to see whether if there is any further increase in computational power (such as a denser universal gate set) for such quantum doubles that require at least three or more rounds of measurement.

Moreover, it is interesting to note that while the present work argues that Fib (and in particular double Fib) cannot be prepared by a finite number of shots, a recent work has shown that string-net models \emph{can} be prepared using $\mathcal O( \ln L)$ layers, where $L$ is the system size \cite{Lu2022} (which can be compared to the known linear depth protocols involving unitary circuits \cite{Liu21}). It is thus tempting to think that perhaps using $\sim \ln L$ shots is optimal, although this is unproven.

Looking forward towards the preparation of more general topological phases of matter in 2+1D, we believe our results generalize to anyon theories described by modular tensor categories (MTCs). Namely, we conjecture that all nil-2 MTCs\footnote{Nilpotence of a MTC is well-defined if we do not require the subcategories appearing in the upper central sequence to be modular.} \cite{GelakiNikshych08} can be prepared in one shot using quasi-FDLUs. In particular, this includes anyon theories that do not have a Lagrangian subgroup, such as Ising anyons. Similarly, we conjecture that all solvable MTCs \cite{Etingof2011weakly} can be prepared using quasi-FDLUs and a finite number of shots. A step towards proving this conjecture, as well as the conjectures given in Sec.~\ref{sec:nonsolvable} would be to show rigorously that performing measurements that are non-locally correctable relates the initial and final topological orders by gauging an Abelian symmetry. A further interesting question is whether all representatives of measurement-equivalent phases are given by perfect MTCs (theories that only contain non-Abelian anyons).

Regarding the preparation of solvable MTCs, it would be worthwhile to obtain rigorous results about the minimal number of shots required to prepare a given state of matter. For example, this number for quantum doubles $\mathcal D(G)$ is upper bounded by the derived length $l_G$, but can be lowered as we have shown for nil-2 groups. How does one calculate this number for general solvable MTCs and does this minimal number coincide with any interesting mathematical quantity?

Relatedly, the explicit form for $(\KW^G)^\dagger$ shows that $\Rep(G)$ can be gauged for any finite group $G$ (which does not need to be solvable) using only a single round of measurement. This is because the measurement outcomes correspond to domain walls of $G$, which can be paired up with $L^g$. In particular, this implies that symmetry broken phases of $G$ can be prepared for any finite $G$ in one shot. Are there other interesting non-invertible symmetries that can be gauged efficiently using measurements and feedforward?

Although the construction in the present paper applies to a system without a boundary, we believe that it is straightforward to apply the construction to the case with a boundary. First of all, applying the KW map to a system with a smooth boundary produces a particular gapped boundary, namely the boundary where all gauge fluxes condense. For example, measuring a 2D cluster state with boundary prepares the toric code where all the $m$-anyons condense. In two spatial dimensions, it is known that gapped boundaries of the quantum double of $G$ are classified by a subgroup $K$ of $G$ and a 2-cocycle $H^2(K,U(1))$\cite{beigi2011quantum,luo2022gapped}. This can be physically interpreted as a particular 1D $G$-symmetric state before gauging. Namely, the boundary corresponds to a symmetry-breaking state where the subgroup $K$ is preserved, and the remaining symmetry can be put in to an SPT state. Since such symmetry breaking and SPT states can also be prepared using FDLU and measurements, this gives an explicit way to construct solvable quantum doubles with arbitrary gapped boundaries. We leave the explicit construction of twisted quantum doubles with arbitrary gapped boundaries, and gapped boundaries of topological orders in higher dimensions (itself an active line of research \cite{zhao2022string,luo2022gapped,Ji2022boundary}) to future work. The preparation of topological orders with condensation defects inserted\cite{Bombin10Defect,KitaevKong12,YouJianWen13,roumpedakis2022higher} would also be an interesting direction.

In higher dimensions, gauge theories of nil-2 groups naturally generalize to higher group gauge theories \cite{KapustinThorngren2017} where Abelian $0$-form symmetries are ``centrally" extended by higher-form symmetries. We give a protocol to prepare such a class of 2-group gauge theories in 3+1D in one shot in Appendix \ref{app:2group}. One can also prepare hybrid fracton phases\cite{TJV1,TJV2} where an Abelian 0-form symmetry is centrally extended by subsystem symmetries in one shot.

\begin{acknowledgments}

The authors thank Ryan Thorngren for collaboration on a related project \cite{measureSPT}, NT and RV thank Aaron Friedman, Andy Lucas, and Drew Potter for illuminating discussions. NT thanks Wenjie Ji for helpful discussions on $\Rep(G)$ symmetry and Julia Plavnik for helpful discussions on nilpotent and solvable categories. NT is supported by the Walter Burke Institute for Theoretical Physics at Caltech. RV is supported by the Harvard Quantum Initiative Postdoctoral Fellowship in Science and Engineering, and RV and AV by the Simons Collaboration on Ultra-Quantum Matter, which is a grant from the Simons Foundation (651440, AV).
\end{acknowledgments}

\vspace{5pt}

\emph{Note added.} As this manuscript was being prepared, we learnt of an upcoming work \cite{Li22}, which provides an alternate protocol to prepare twisted quantum doubles for non-Abelian groups. In addition, the authors of Ref.~\onlinecite{Bravyi22} have recently informed us of a refinement to their previous proof, to appear, that removes the implicit restriction to split extensions in Ref.~\onlinecite{Bravyi22}.

\vspace{5pt}

\bibliography{bib.bib}
\onecolumngrid
\newpage
\appendix

\newpage

\section{Invertible states from ancillas}

\subsection{Preparing the Kitaev chain from FDLU and ancillas}\label{app:Kitaev}

The Kitaev chain can be prepared by FDLU and ancillas by preparing two Kitaev chains and discarding the second copy. We start with two trivial fermionic chains with Majorana operators $\gamma_n, \gamma_n'$ and $\eta_n, \eta_n'$. The stabilizer of the trivial state (atomic insulator) with all sites unoccupied is
\begin{align}
    -i\gamma_n\gamma_n' & & -i\eta_n \eta_n'
\end{align}
Now we implement two layers of Majorana swap gates
\begin{align}
    U = \exp\left(\frac{\pi}{4}\sum_n\eta_{n+1} \gamma_n' \right) \exp\left(\frac{\pi}{4} \sum_n  \eta_n\gamma_n'\right) 
\end{align}
From this we see that
\begin{align}
U\gamma_n U^\dagger &= \gamma_n & U\gamma_n' U^\dagger &= -\gamma_{n-1}'\\
U\eta_n U^\dagger &= -\eta_{n+1} & U\eta_n' U^\dagger &= \eta_n'
\end{align}
That is, all $\gamma'$ are translated to the left while all $\eta$ are translated to the right (up to a minus sign). The resulting stabilizers are 
\begin{align}
    i\gamma_n\gamma_{n-1}' & & i\eta_{n+1} \eta_n'
\end{align}
which is exactly the stabilizer for two copies of the Kitaev chain.

\subsection{Preparing the Chern Insulator from quasi-FDLU and ancillas}\label{app:Chern}

\begin{figure}[h!]
    \centering
    \begin{tikzpicture}
    \node at (0,0) {\includegraphics[scale=0.75]{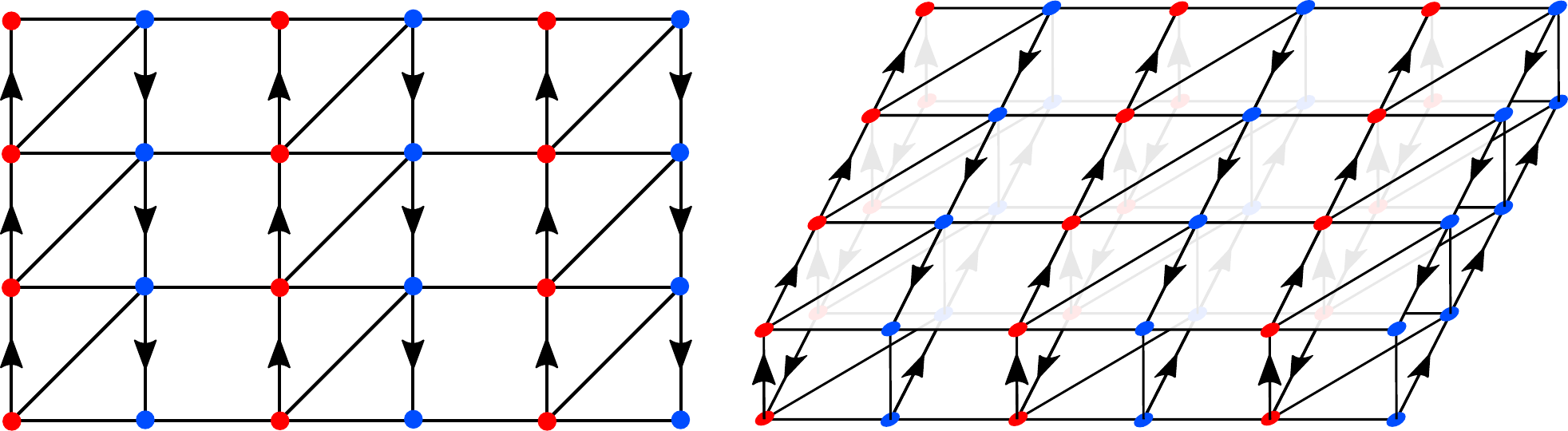}};
    \node at (-8,1.7) {(a)};
    \node at (0,1.7) {(b)};
    \end{tikzpicture}
    \caption{(a) 2D lattice model with nonzero Chern number (see Eq.~\eqref{eq:Chern}); bonds with arrows correspond to an imaginary hopping. (b) a stack of two Chern insulators with opposite Chern numbers; it can be adiabatically connected to a product state by tuning the rung couplings.}
    \label{fig:chernstack}
\end{figure}
Next, outline how to prepare the Chern Insulator with quasi-FDLU and ancillas. First, consider the following Hamiltonian
\begin{equation}
H = \sum_{x,y} \left( \frac{i}{2} a_{x,y+1}^\dagger a_{x,y}^{\vphantom \dagger} - \frac{i}{2} b_{x,y+1}^\dagger b_{x,y}^{\vphantom \dagger} + a_{x,y}^\dagger b_{x,y+1}^{\vphantom \dagger}+ a_{x+1,y}^\dagger b_{x,y}^{\vphantom \dagger} + a_{x,y}^\dagger b_{x,y}^{\vphantom \dagger}  \right) +h.c. \label{eq:Chern}
\end{equation}
where the couplings are depicted in Fig.~\ref{fig:chernstack}(a). The red (blue) dots are $A$ ($B$) sites, forming the two-site unit cell.
It can be straightforwardly checked that this has Chern number $\mathcal C=1$.  In momentum space\footnote{$a_{x,y} =\frac{1}{2\pi} \int \mathrm d k_x \mathrm d k_y e^{ik_x x + i k_y y} a_{k_x,k_y}$}:
\begin{equation}
H =  \iint \mathrm dk_x \mathrm dk_y \left( a_{k_x,k_y}^\dagger, b_{k_x,k_y}^\dagger \right)  \left( \begin{array}{cc}
\sin(k_y) & 1+ e^{-ik_x} + e^{ik_y} \\
1+ e^{ik_x} + e^{-ik_y} & -\sin(k_y)
\end{array} \right)
\left(\begin{array}{c}
a_{k_x,k_y} \\
b_{k_x,k_y}
\end{array}\right)
\end{equation}

The non-trivial Chern number implies that the ground state cannot be adiabatically connected to a product state. However, let us consider the double stack in Fig.~\ref{fig:chernstack}(b). where a second copy is introduced with inverted signs of the imaginary hopping, giving  $\mathcal C =-1$ to the second copy. As a whole, this non-chiral system can be adiabatically connected to a product state by simply introducing rung couplings; the momentum-space Hamiltonian is:
\begin{equation}
\mathcal H_{\boldsymbol k} = (1-\lambda) \left( \begin{array}{cc|cc}
\sin(k_y) & 1+e^{-ik_x} + e^{i k_y} & 0 & 0 \\
1+ e^{ik_x} + e^{-ik_y}. & -\sin(k_y) & 0  & 0 \\ \hline
0 & 0 & -\sin(k_y) &1+e^{-ik_x} + e^{i k_y}\\
0 & 0 &1+e^{ik_x} + e^{-i k_y} &  \sin(k_y) 
\end{array} \right)
+ \lambda\left( \begin{array}{cc|cc}
0 & 0 & i & 0 \\
0 & 0 & 0 & 1 \\ \hline
-i & 0 & 0 & 0\\
0 & 1 & 0 & 0
\end{array} \right)
\end{equation}
For $\lambda=0$ we have the two decoupled layers, but for $\lambda=0$ we add a real (imaginary) hopping between the red (blue) sites. The case $\lambda=1$ is a trivial product state along each rung. The ground state remains gapped throughout, as shown in Fig.~\ref{fig:chernstackgap}.

\begin{figure}[t!]
    \centering
    \includegraphics[scale=0.3]{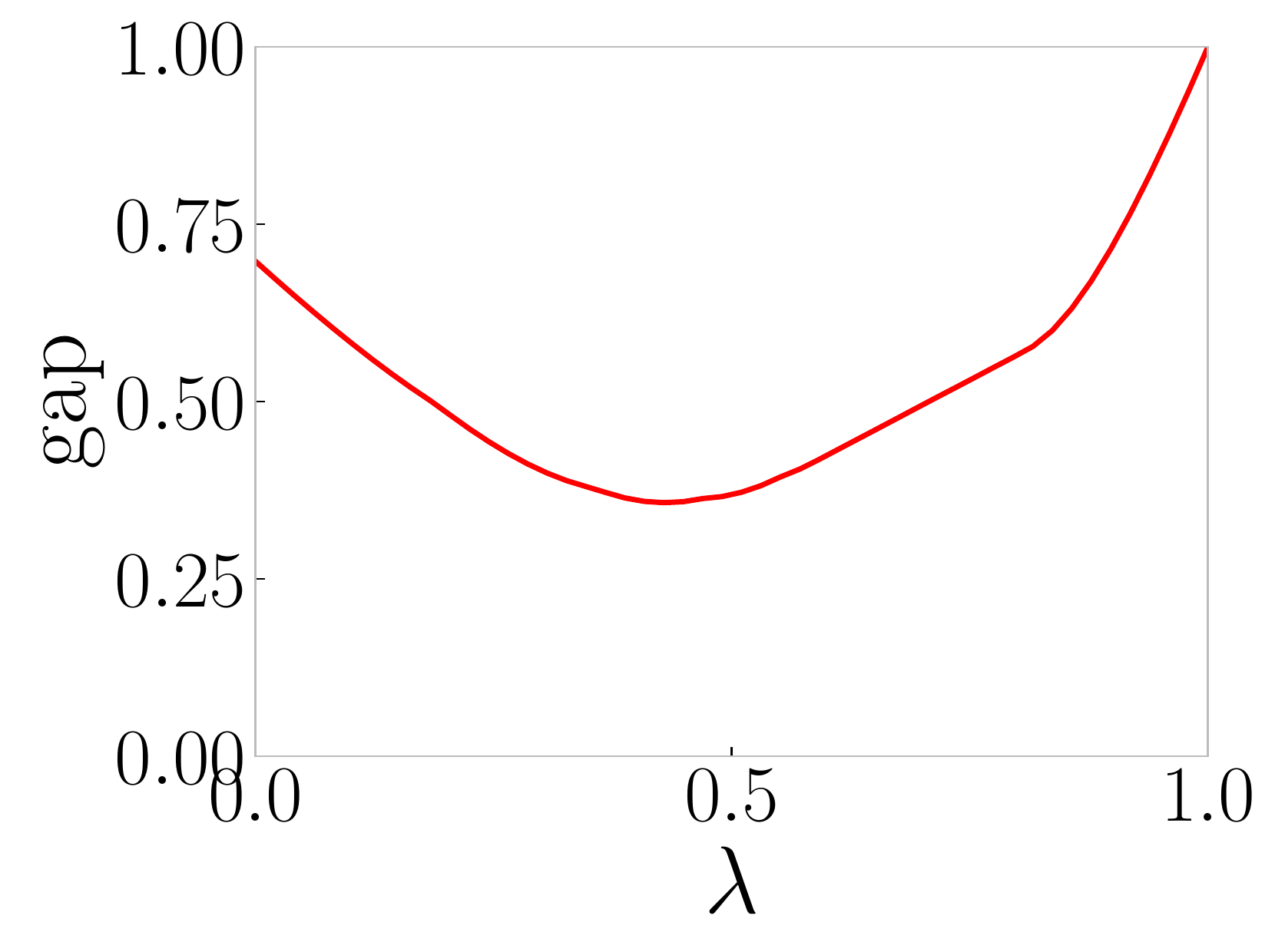}
    \caption{Adiabatic path connecting two decoupled Chern insulators ($\lambda=0$) to a product state ($\lambda=1$). The Hamiltonian is given in Eq.~\eqref{eq:Chern} and the couplings are depicted in Fig.~\ref{fig:chernstack}.}
    \label{fig:chernstackgap}
\end{figure}

\section{Further details on the KW maps}\label{app:KW}
\subsection{$\KW^G$ from the generalized cluster state}\label{app:Gclustestate}
We can explicitly construct the map $\KW^G$  as tensor network operator and demonstrate that it has the correct action. Denote $\ket{+}^G = \frac{1}{\sqrt{|G|}}\sum_{g\in G} \ket{g}$ and $\ket{1}^G = \ket{1^G}$ where $1^G$ is the identity element of $G$. Define
\begin{align}
    \KW^G_{EV} = \bra{+}^G_V U_{EV}^G \ket{1}^G_E
    \label{eq:KWGimplement}
\end{align}
where
\begin{align}
   U_{EV}^G=\prod_v\left [ \prod_{e\rightarrow v} CR^\dagger_{ve} \prod_{e\leftarrow v} CL^\dagger_{ve}  \right] = \prod_e CL^\dagger_{i_ee} CR^\dagger_{f_ee}.
   \label{eq:UG}
\end{align}

The unitary $U^G$ is the generalization of the cluster state entangler for $G$ degrees of freedom, which has both $G$ symmetry and $\Rep(G)$ $(d-1)$-form symmetry\footnote{Here, our construction differs slightly from Ref.~\onlinecite{Brell2015} by using $CL^\dagger$ and $CR^\dagger$ instead of $CL$ and $CR$. This is so that the $G$ 0-form symmetry acts as left multiplication instead of right multiplication, and is more compatible with the usual definition of a factor system where $\sigma$ acts from the left. See Appendix A-B of Ref.~\onlinecite{TJV2} for an example of the calculation with right multiplication.}. The Controlled-Not gates are generalized to controlled left and right multiplication operators
\begin{align}
\begin{split}
CL_{ve} \ket{g_1}_v\ket{g_2}_e &= \ket{g_1}_v\ket{g_1g_2}_e, \\ 
CR_{ve} \ket{g_1}_v\ket{g_2}_e &= \ket{g}_v\ket{g_2\bar g_1}_e
\end{split}
\label{eq:controlled-not}
\end{align}
An intuitive way to see the action of $U^G$ is to note that for each edge $e$, the unitary acts as
\begin{align}
    CL^\dagger_{i_ee}CR^\dagger_{f_ee}\ket{g_{i_e}}\ket{g_{e}}\ket{g_{f_e}} = \ket{g_{i_e}}\ket{\bar g_{i_e}g_{e} g_{f_e}} \ket{g_{f_e}}.
    \label{eq:clusterstateproperty}
\end{align}
Thus, after setting $g_e=1$ (which is implemented by contracting with $\ket{1}^G_E$), we are left with $\bar g_{i_e} g_{f_e}$, the domain-wall variable we map to in Eq.~\eqref{equ:KWonstatesG}.

Let us show that $\KW^G$ defined in Eq. ~\eqref{eq:KWGimplement} has the correct kernels. The following properties of the controlled left and right multiplication operators will be useful.

\begin{align}
CL_{ve}^\dagger (L^g_v \otimes L^g_e )CL_{ve} &= L^g_v \otimes \mathbbm1_e, & CR_{ve}^\dagger(L^g_v \otimes R^g_e)CR_{ve} &= L^g_v \otimes \mathbbm1_e,\label{eq:CL1}\\
CL_{ve}^\dagger(R^g_v \otimes  \mathbbm 1_e) CL_{ve}  &= R^g_v \otimes L^g_e, & CR_{ve}^\dagger(R^g_v \otimes  \mathbbm 1_e)CR_{ve} &= R^g_v \otimes R^g_e,\label{eq:CL2}\\
CL_{ve}(\bs Z^\mu_v \otimes \bs Z^\mu_e)CL_{ve}^\dagger &=\mathbbm 1_v \otimes \bs Z^\mu_e, & CR_{ve}(\bs Z^{\bar \mu}_v \otimes \bs Z^\mu_e)CR_{ve}^\dagger &=\mathbbm 1_v \otimes \bs Z^\mu_e . \label{eq:CL4}
\end{align}

First let us show that left multiplication is a symmetry, Eq.~\eqref{eq:G0form},
\begin{align}
\begin{split}
\KW^G  \times \prod_v L^g_v &=\bra{+}^G_V U^G \prod_v L^g_v \ket{1}^G_E \\
&= \bra{+}^G_V U^G \prod_v\left [ L^g_v \prod_{e\rightarrow v} R^g_e\prod_{e\leftarrow v} L^g_e \right] \ket{1}^G_E \\
&=\bra{+}^G_V \prod_v L^g_v U^G \ket{1}^G_E  = \bra{+}^G_V  U^G \ket{1}^G_E =\KW^G
\end{split}
\end{align}
where in the second line  we used the fact that  $\displaystyle \prod_v  \left [ \prod_{e\rightarrow v} R^g_v\prod_{e\leftarrow v} L^g_v \right]\ket{1}^G_e =\ket{1}^G_e$ since each edge is acted by $L^g_eR^g_e$ which leaves $1^G$ invariant, and on the third line we used \eqref{eq:CL1} to obtain $\displaystyle U^G  \left[L^g_v \prod_{e\rightarrow v} R^g_v\prod_{e\leftarrow v} L^g_v\right]  \left(U^G\right)^\dagger =  L^g_v$ and finally $\bra{+}^G_v L^g_v = \bra{+}^G_v$.

Next, to show the dual symmetry \eqref{eq:repGd-1form}, we notice that
\begin{align}
\begin{split}
\bs Z^\mu_e  \times  \KW^G &= \bra{+}^G_V \bs Z^\mu_e  U^G  \ket{1}^G_E \\
 &= \bra{+}^G_V   U^G  \bs Z^{\bar\mu}_{i_e}\bs Z^\mu_e \bs Z^{\mu}_{f_e} \ket{1}^G_E \\
&=\bra{+}^G_V   U^G  \bs Z^{\bar\mu}_{i_e}\bs Z^{\mu}_{f_e}  \ket{1}^G_E = \KW^G  \times   \bs Z^{\bar\mu}_{i_e}\bs Z^{\mu}_{f_e}
\label{eq:ZZmapG}
\end{split}
\end{align}
where on the second line we used \eqref{eq:CL4} to obtain $\displaystyle \left(U^G\right)^\dagger\bs  Z^\mu_e U^G = \bs Z^{\bar\mu}_{i_e} \bs  Z^\mu_e\bs Z^{\mu}_{f_e}$
and on the third line we used $ \bs Z^\mu_e\ket{1}^G_E= \bs {\mathbbm 1}_e\ket{1}^G_E$.  Then Eq.~\eqref{eq:repGd-1form} follows immediately by sending through instead a product of $\bs Z^\mu_e$ around a closed loop and noticing that for each vertex along this loop we have $\bra{+}^G_v \displaystyle \bs Z^{\bar \mu}_v \bs Z^\mu_v  = \bra{+}^G_v \bs{\mathbbm 1}_v$ since for each $\bra{g}$ in the sum, we have $\bs\rho^{\bar \mu}(g) \bs\rho^{\mu}(g) = \bs\rho^{ \mu}(\bar g) \bs\rho^{\mu}(g) = \bs\rho^\mu(1^G) = \bs{\mathbbm 1}$

Lastly, we figure out the result of ``gauging" $R^g_v$ Eq.~\eqref{eq:LgtoAg}
\begin{align}
\begin{split}
\KW^G  \times  R^g_v &=\bra{+}^G_V U^G R^g_v \ket{1}^G_E\\
&= \bra{+}^G_V  R^g_v \prod_{e\rightarrow v} R^g_e\prod_{e\leftarrow v} L^g_e U^G  \ket{1}^G_E\\
&= \bra{+}^G_V  \prod_{e\rightarrow v} R^g_e\prod_{e\leftarrow v}  L^g_e U^G  \ket{1}^G_E = A^g_v  \times  \KW^G 
\end{split}
\end{align}
where on the second line we used \eqref{eq:CL2} to obtain $
   \displaystyle U^GL^g_v\left(U^G\right)^\dagger = R^g_v \prod_{e\rightarrow v} R^g_e\prod_{e\leftarrow v} L^g_e$
and on the third line we used $\bra{+}^G_V R^g_v = \bra{+}^G_V$.

\subsection{More on factor systems}\label{app:factor}
Here, we give a careful derivation of the properties regarding factor systems and how they determine the group extensions. 

Given a group $G$ and a normal subgroup $N \triangleleft G$, one has an exact sequence
\begin{equation}
 1 \rightarrow N  \xrightarrow[]{\iota} G  \xrightarrow[]{\pi} Q \rightarrow 1
 \label{eq:extension}
\end{equation}
That is, there exists an injective map $\iota: N \rightarrow G$ and a surjective map $\pi:G \rightarrow Q$ such that $\pi \circ \iota = 1$. Next, we pick a lift $s:Q\rightarrow G$ such that $\pi \circ s$ is the identity map in $Q$. Note that $s$ is not a group homomorphism. The map $s$ allows us to define the two pieces of data in the factor system

\begin{enumerate}
    \item The map $\sigma:Q\rightarrow \text{Aut}(N)$ can be defined by $$\iota(\sigma^q[n]) = s(q) \iota(n) \overline{s(q)}.$$ That is for each $q \in Q$, $\sigma^q$ defines an automorphism on $N$ given by conjugation with $s(q)$.
    \item The cocycle $\omega:Q^2\rightarrow N$ can be defined by
    $$ \iota(\omega(q_1,q_2)) = s(q_1) s(q_2)\overline{s(q_1q_2)},$$
    which captures the failure of $s$ to be a group homomorphism.
\end{enumerate}

Note that because $s(1^Q) =1^G$ we automatically have the following properties, assumed in the main text:
\begin{enumerate}
\item $\sigma^1$ acts as the identity automorphism.
\item $\omega$ is counital (``normalized") i.e. $\omega(1^Q,q) = \omega(q,1^Q)=1^N$.
\end{enumerate}
We remark that $\sigma$ might fail to be a group homomorphism if $N$ is non-Abelian
\begin{align}
    \sigma^{q_1} \circ \sigma^{q_2} = c^{\omega(q_1,q_2)} \circ \sigma^{q_1q_2}
\end{align}
where $c^{n_1}$ is an inner automorphism which acts as conjugation: $c^{n_1}[n_2]=n_1n_2\bar{n}_1$.

Let us define the elements of the extended group $g\in G$ as $g= (n,q) \equiv \iota(n) s(q)$. From this, it follows that the group multiplication is given by
\begin{align}
    (n_1,q_1) \times (n_2,q_2) = (n_1\aut{q_1}{n_2}\omega(q_1,q_2),q_1q_2)
    \label{eq:multiplicationlawapp}
\end{align}
Furthermore, associativity of group multiplication requires $\omega$ to satisfy the following cocycle condition
\begin{equation}
    \aut{q_1}{\omega(q_2,q_3)} \omega(q_1,q_2q_3)  = \omega(q_1,q_2) \omega(q_1q_2,q_3).
    \label{equ:cocycleconditionapp}
\end{equation}

\subsection{Properties of $\KW^{N\triangleleft G}$}\label{app:KWN}

We would like to show that $\KW^{N \triangleleft G}_{EV}$ (defined as the unique map that gauges $N$ and leaves $Q$ invariant) can be expressed as
\begin{align}
   \KW^{N \triangleleft G}_{EV} =\mathcal U^{N \triangleleft G}_{EV} \times \KW^N_{EV}
\end{align}
Equivalently, it suffices to define $\KW^{N \triangleleft G}$ as above, and show that $\KW^{N \triangleleft G}$ has the correct kernels.  First, it is helpful to define the unitary
\begin{align}
   U^{N \triangleleft G}_{EV} = \mathcal U^{N \triangleleft G}_{EV} U^N_{EV} =\Sigma_{EV}^{-1} \Omega_{VEV} U^N_{EV}
\end{align}
so that the map can be expressed as
\begin{align}
   \KW^{N \triangleleft G}_{EV} = \bra{+}^N_V U_{EV}^{N \triangleleft G} \ket{1}^N_E.
\end{align}

We note that similarly to Eq.~\eqref{eq:clusterstateproperty}, an intuitive way to see why this map works is to note that, $U^{N \triangleleft G} $ acts on each edge as
\begin{align}
U^{N \triangleleft G}_{EV} \ket{g_{i_e}}\ket{n_e}\ket{g_{f_e}}=\ket{g_{i_e}}\ket{\bar \omega(\bar q_{i_e},q_{i_e}) \aut{\bar q_{i_e}}{\bar n_{i_e}n_e n_{f_e}}\omega(\bar q_{i_e},q_{f_e})}\ket{g_{f_e}}
\end{align}
Therefore, after setting $n_e=1$ (which is implemented by contracting with $\ket{1}^N_E$), we are left with $t(\bar g_{i_e}g_{f_e})$, the domain wall variable we map to in Eq.~\eqref{eq:tgigfbar}.

For calculation purposes, it is also useful to have
\begin{align}
(U^{N \triangleleft G}_{EV})^\dagger \ket{g_{i_e}}\ket{n_e}\ket{g_{f_e}}   &=\ket{g_{i_e}}\ket{n_{i_e} \aut{q_{i_e}}{n_e}\omega(q_{i_e},\bar q_{i_e}q_{f_e}) \bar n_{f_e}}\ket{g_{f_e}}   
\end{align}
as well as the action of left and right multiplications in the basis $\ket{n,q} \in \CC[N]\otimes \CC[Q]$.
\begin{align}
  L^g_v \ket{n_v,q_v} &=   \ket{n\aut{q}{n_v}\omega(q,q_v),qq_v} & R^g_v \ket{n_v,q_v} &=   \ket{n_e\bar \omega(q_v\bar q,q)\aut{q_v\bar q}{\bar n},q_v\bar q}
\end{align}

From the above, one finds the following identities
\begin{align}
   (U^{N \triangleleft G})^\dagger \left( \prod_v L^g_v \right)  U^{N \triangleleft G} &= \prod_v L^g_v \prod_e L^n_e R^n_e \Sigma^q_e \label{eq:conjL_N}\\
 (U^{N \triangleleft G})^\dagger \tilde{\bs Z}^\nu_e  U^{N \triangleleft G}& = {\bs Z}^{\bar\nu}_{i_e} {\bs Z}^{\nu}_{e} {\bs Z}^{\nu}_{f_e} \label{eq:conjZtilde_N}
\end{align}
where $\Sigma^q = \sum_n \ket{\sigma^q[n]}\bra{n}$

Now let us check that $\KW^{N \triangleleft G}$ has the correct kernels, to show Eq. \eqref{eq:Q0form},
\begin{align}
\KW^{N \triangleleft G}  \times \prod_v L^g_v &=\bra{+}^N_V U^{N \triangleleft G} \prod_v L^g_v \ket{1}^N_E \nonumber \\
&=\bra{+}^N_VU^{N \triangleleft G} \left[\prod_v L^g_v \prod_e L^n_e R^n_e \Sigma^q_e\right]  \ket{1}^N_E \\
&=\bra{+}^N_V \prod_v L^g_v U^{N \triangleleft G}   \ket{1}^N_E = \bra{+}^N_V \prod_v L^q_v U^{N \triangleleft G}   \ket{1}^N_E =\prod_v L^q_v \times  \KW^{N \triangleleft G}
\end{align}
where on the second line we used the fact that $L^n_e R^n_e \Sigma^q_e\ket{1^N}_e =\ket{n\sigma^q[1^N]\bar n}_e = \ket{1^N}_e$, on the third line we used Eq.~\eqref{eq:conjL_N} and finally $\bra{+}^N_V \prod_v L^g_v = \bra{+}^N_V \prod_v L^q_v$.

Next, to show the dual $\Rep(N)$ $(d-1)$-form symmetry, consider
\begin{align}
     \tilde{\bs  Z}^{\nu}_e \times \KW^{N\triangleleft G} &= \bra{+}^N_V \tilde{\bs  Z}^{\nu}_e   U^{N \triangleleft G} \ket{1}^N_E\\
&=\mathcal U \bra{+}^N_V  U^{N \triangleleft G}    {\bs Z}^{\bar\nu}_{i_e} {\bs Z}^{\nu}_{e} {\bs Z}^{\nu}_{f_e} \ket{1}^N_E\\
&=\mathcal U \bra{+}^N_V  U^{N \triangleleft G} {\bs Z}^{\bar\nu}_{i_e}  {\bs Z}^{\nu}_{f_e} \ket{1}^N_E=\KW^{N\triangleleft G} \times   {\bs Z}^{\bar\nu}_{i_e}  {\bs Z}^{\nu}_{f_e}
\end{align}
where on the second line we used  Eq.~\eqref{eq:conjZtilde_N}, and on the third line we used $\bs Z^\nu_e\ket{1}^N_E = \mathbbm 1_e\ket{1}^N_E$. Thus, by taking a product over all edges in a closed loop, the  ${\bs Z}^{\bar\nu}_{i_e}  {\bs Z}^{\nu}_{f_e}$ terms pairwise cancel at each vertex, proving Eq. \eqref{eq:repNd-1form}.

\section{Nil-2 Quantum Doubles}\label{app:nil2}

\subsection{Proof of preparation}\label{app:nil2prepproof}
Let us confirm that the protocol in Eq.~\eqref{eq:nil2} indeed prepares $\mathcal D(G)$. An important observation is that the intermediate state after gauging $N$ is exactly the toric code state. Therefore,
\begin{align}
    \ket{\mathcal D(N)}_E =\KWd_{EP}^N\ket{+}^N_P = \KW_{EV}^N\ket{+}^N_V
\end{align}
where instead of measuring the plaquette operators, we instead prepared the toric code by measuring the vertex operators. Therefore,
\begin{align}
\KW^Q_{EV}\Omega_{VEV} \KWd^N_{EP}\ket{+}^Q_V \ket{+}^N_P &=  \KW^Q_{EV}\Omega_{VEV}\KW_{EV}^N\ket{+}^Q_V \ket{+}^N_V \\
&= \KW^Q_{EV}\KW_{EV}^{N\triangleleft G} \ket{+}^Q_V \ket{+}^N_V \\
&= \KW^G_{EV} \ket{+}^G_V  = \ket{\mathcal D(G)}_E
\end{align}
where we used the definition of $\KW^{N \triangleleft G}$ in Eq.~\eqref{eq:KWNGdef} for the case of a central extension (trivial $\sigma$ and therefore $\Sigma_{EV} = \mathbbm 1$) and the two-step gauging property in Eq.~\eqref{eq:2step}.

\subsection{One-shot preparation of nil-2 quantum double by gauging a decorated domain wall SPT state}\label{app:DDW}
An alternative way to understand the protocol in Eq.~\eqref{eq:nil2} is to treat the state preparation as gauging an SPT state protected by $N \times Q$. First, note that if $\Omega_{VEV}$ were absent, then we are simply applying KW to gauge the product state with symmetry $N \times Q$, which will give an $N\times Q$ toric code. However, as pointed out in the main text, by using $\Omega_{VEV}$ we are turning the $N$-toric code phase into a non-trivial SET phase protected by $Q$. In fact, we can further push $\Omega_{VEV}$ through the $\KWd^N$ by using the fact that it performs a right multiplication by $\omega(q_{i_e},\bar q_{i_e}q_{f_e})$, thus pushing it through and using the fact that $\KWd^{N}$ turns ``$X$" (in this case, right multiplication) into ``$ZZ$",
\begin{align}
    \Omega_{VEV}  \KWd^N_{EP} &= \KWd^N_{EP} \Omega_{VVPP},\\
    \Omega_{VVPP} \ket{q_{i_e}, q_{f_e},n_{i_{\check{e}}},n_{f_{\check{e}}} } &= \chi^{\bar \omega(q_{i_e},\bar q_{i_e}q_{f_e})}(\bar n_{i_{\check{e}}}  n_{f_{\check{e}}} )\ket{q_{i_e}, q_{f_e},n_{i_{\check{e}}},n_{f_{\check{e}}} }
\end{align}
where $i_{\check{e}}$ and  $f_{\check{e}}$ denotes the plaquettes with dotted lines pointing into and out of the edge $e$. This is exactly the decorated domain wall wavefunction\cite{decorateddomainwalls}. For each edge $e$, a ``1D SPT state" given by a 2-cocycle $\omega$ whose wavefunction lives on the vertices and is given by $\omega(q_{i_e}\bar q_{f_e},q_{f_e})$ is present whenever there is an $N$ domain wall, which is detected by the combination $n_{i_{\check{e}}}\bar n_{f_{\check{e}}}$. To conclude, our state has been recasted into the form
\begin{align}
     \ket{\mathcal D(G)}_E &=\bra{+}^Q_{V} CX_{VE} \ket{1}^Q_E \times   \bra{+}^N_P CZ_{PE}\ket{+}^N_E    \times  \ket{\SPT} 
\end{align}
where the SPT protected by $N\times Q$ is given by
\begin{align}
 \ket{\SPT}= \Omega_{VVPP} \ket{+}^Q_V \ket{+}^N_P.
\end{align}
Hence in this viewpoint, we are preparing a twisted quantum double $\mathcal D^\alpha(N\times Q)$ realizing the same phase of matter as $\mathcal D(G)$\footnote{Mathematically, this follows from the fact that $\text{Vec}(G)$ and $\text{Vec}^\alpha(N\times Q)$ are Morita equivalent as fusion categories. Therefore, the topological order in the bulk are related by a circuit \cite{Lootens22}. Nevertheless, our preparation scheme here is designed such that we \textit{exactly} prepare $\mathcal D(G)$ and no extra circuit is required.}. The 3-cocycle $\alpha$ corresponds to a class $[\alpha] \in H^3(N \times Q,U(1))$, and can be related to $\omega$ using the $H^2(Q,H^1(N,U(1)))$ part of the Kunneth formula). The two classes are related via\cite{HuWan20,Tachikawa_2020}
\begin{align}
    \alpha = \rho \cup \omega
\end{align}
where $\rho$ is the generating class of $H^1(N,U(1))$.

\section{One shot preparation of 2-group gauge theory with Abelian 0-form symmetry}\label{app:2group}

A 2-group\footnote{not to be confused with $p$-group used in group theory where $|G|$ is some power of $p$.} $\mathbb G$ gauge theory can be specified by a $0$-form symmetry $Q$, a $1$-form symmetry $N$, an automorphism $\sigma: Q \rightarrow \text{Aut}(N)$ and a certain 3-cocycle $\omega\in H^3_\sigma(Q,N)$ called the Postnikov class. The natural generalization of a nil-2 group in this setting is to assume that $Q$ is Abelian, and that the automorphism $\sigma$ is trivial. Similar to the 2D construction, we start with a 3D triangulation with branching structure and place $\CC[Q]$ on vertices $V$ and edges $E$, and $\CC[N]$ on plaquettes $P$ and tetrahedra $T$.
\begin{align}
    \ket{\mathbb G}_{EP} &=\KW^Q_{EV} \Omega_{VPV} \KWd^N_{PT} \ket{+}^Q_V \ket{+}^N_T \nonumber\\
    &=\bra{+}_{VT} CX^Q_{VE}   \Omega_{VPV}    CZ^N_{TP} \ket{+}^Q_{V}\ket{1}^Q_E \ket{+}^N_P\ket{+}^N_T
\end{align}
To define $\Omega_{VPV}$ the vertices in each plaquette $p$ can be ordered using the branching structure as $v_{p0},v_{p1},v_{p2}$. Then,
\begin{align}
      \Omega_{VPV} &=\prod_p \ket{q_{v_{p0}},q_{v_{p1}} q_{v_{p1}}, n_p \bar \omega(q_{v_{p0}},\bar q_{v_{p0}}q_{v_{p1}},\bar q_{v_{p1}}q_{v_{p2}} )}\bra{q_{v_{p0}},q_{v_{p1}} q_{v_{p1}},n_p} 
\end{align}

Similarly, this construction can be thought of as gauging a $N\times Q$ SPT where both symmetries are $0$-form with 4-cocycle $\alpha = \rho \cup \omega$ where $\rho$ is the generating class of $H^1(N,U(1))$. After commuting $\Omega_{VPV}$ through $\KWd^N_{PT}$ we obtain
\begin{align}
\ket{\mathbb G}_E &=\KW^Q_{EV} \KWd^N_{PT}  \Omega_{VVVTT} \ket{+}^Q_V \ket{+}^N_T
\end{align}
where $\Omega_{VVVTT}$ is the entangler for the SPT by decorating a 2D $Q$- ``SPT" on $N$ domain walls
\begin{align}
      \Omega_{VVVTT}\ket{q_{v_{p0}},q_{v_{p1}} q_{v_{p2}},n_{{i_p}},n_{{f_e}}}  &=
      \chi^{\bar \omega(q_{v_{p0}},\bar q_{v_{p0}}q_{v_{p1}},\bar q_{v_{p1}}q_{v_{p2}} )}(\bar n_{t_{i_p}}n_{t_{f_p}})
      \ket{q_{v_{p0}},q_{v_{p2}} q_{v_{p1}},n_{{i_p}},n_{{f_e}}} 
\end{align}
Here, ${i_p}$ and ${f_p}$ are tetrahedra that point into and out of the plaquette $p$.

\end{document}